\theoremstyle{plain}
\newtheorem{theorem}{Theorem}[section]
\newtheorem{lemma}[theorem]{Lemma}
\newtheorem{proposition}{Proposition}
\theoremstyle{definition}
\theoremstyle{remark}
\DeclareMathOperator*{\argmax}{arg\,max}
\DeclareMathOperator*{\argmin}{arg\,min}
\newcommand{\indep}{\perp \!\!\! \perp}
\begin{document}
\begin{frontmatter}
\title{Multi-Label Residual Weighted Learning for Individualized Combination Treatment Rule}
\runtitle{ICTR}

\begin{aug}
\author[A]{\fnms{Qi}~\snm{Xu}\ead[label=e1]{qxu6@uci.edu}\orcid{0000-0003-1127-6523}},
\author[A]{\fnms{Xiaoke}~\snm{Cao}\ead[label=e2]{xiaokec@uci.edu}},
\author[B]{\fnms{Geping}~\snm{Chen}\ead[label=e3]{gepingc@iastate.edu}\orcid{0009-0004-4551-1653}},
\author[C]{\fnms{Hanqi}~\snm{Zeng}\ead[label=e4]{hanqizeng@hsph.harvard.edu}\orcid{0009-0001-4156-7260}},
\author[D]{\fnms{Haoda}~\snm{Fu}\ead[label=e5]{fu\_haoda@lilly.com}}
\and
\author[A]{\fnms{Annie}~\snm{Qu}\ead[label=e6]{aqu2@uci.edu}\orcid{0000-0002-8396-7828}}
\address[A]{Department of Statistics,
University of California, Irvine\printead[presep={,\ }]{e1,e2,e6}}

\address[B]{Department of Statistics,
Iowa State University\printead[presep={,\ }]{e3}}

\address[C]{Department of Biostatistics, Harvard University\printead[presep={,\ }]{e4}}

\address[D]{Eli Lilly and Company\printead[presep={,\ }]{e5}}

\runauthor{Xu. Q et al.}

\end{aug}

\begin{abstract}
Individualized treatment rules (ITRs) have been widely applied in many fields such as precision medicine and personalized marketing. Beyond the extensive studies on ITR for binary or multiple treatments, there is considerable interest in applying combination treatments. This paper introduces a novel ITR estimation method for combination treatments incorporating interaction effects among treatments. Specifically, we propose the generalized $\psi$-loss as a non-convex surrogate in the residual weighted learning framework, offering desirable statistical and computational properties. Statistically, the minimizer of the proposed surrogate loss is Fisher-consistent with the optimal decision rules, incorporating interaction effects at any intensity level - a significant improvement over existing methods. Computationally, the proposed method applies the difference-of-convex algorithm for efficient computation. Through simulation studies and real-world data applications, we demonstrate the superior performance of the proposed method in recommending combination treatments.
\end{abstract}

\begin{keyword}[class=MSC]
\kwd[Primary ]{62C12}
\kwd{62H30}
\end{keyword}

\begin{keyword}
        \kwd{Combination Therapy}
        \kwd{Decision Making}
        \kwd{Difference of Convex}
        \kwd{Fisher Consistency}
        \kwd{Precision Medicine}
\end{keyword}

\end{frontmatter}
\section{Introduction}

The individualized treatment rule (ITR) in precision medicine has been widely applicable in recommending tailored treatment for each individual. Unlike the traditional one-size-fits-all strategy, ITR aims to account for subject heterogeneity to achieve personalization. While existing ITR methods mainly focus on choosing one of two or more treatments, combination treatments have emerged as a promising strategy to achieve better outcomes including enhanced efficacy and resistance prevention \citep{pernas2018balixafortide, mokhtari2017combination, kalra2010combination, maruthur2016diabetes}. Furthermore, combination treatments have also been applied in personalized marketing, where a mix of promotional strategies are tailored to diverse customer groups. In summary, developing ITR methods for combination treatments is of great interest not only in precision medicine and personalized marketing, but also potentially in many other fields.

The existing literature on ITR estimation can be broadly summarized into two categories. The first of these categories is the indirect approach, including the well-known Q-learning \citep{qian2011performance, clifton2020q}, D-learning \citep{qi2018d, qi2020multi} and A-learning \citep{shi2018high, lu2013variable, schulte2014q}. These methods propose parametric or non-parametric models for conditional average treatment effects to recommend preferable treatment. Another mainstream of ITR is the direct approach, including outcome-weighted learning \citep{zhao2012estimating, zhou2018outcome, zhang2020multicategory, liu2021outcome, xue2022multicategory}, and residual-weighted learning \citep{zhou2017residual}. These methods directly maximize the value function with respect to the decision rules. In practice, direct methods have demonstrated superior empirical performance, as they circumvent model misspecification issues common to indirect approaches. In addition, their decision rules are flexible, accommodating either parametric models (e.g., linear decision rule \citep{zhao2012estimating, zhang2020multicategory}) or nonparametric models (e.g., kernel method \citep{zhao2012estimating, zhang2020multicategory}, neural network \citep{liang2018estimating}, and boosting \citep{wang2020boosting}).

In regard to the combination treatment problem, we can apply the aforementioned multicategory ITRs to recommend the combination treatment problems, where each combination is treated as an independent treatment. Therefore, the correlations among combination treatments are ignored by the multicategory ITRs. Consequently, as the number of single treatments increases, the number of combination treatments tends to increase exponentially, which leads the model complexity of either direct or indirect approaches to explode. Given the limited or moderate sample sizes in biomedical applications, the estimation efficiency of multicategory ITRs is severely compromised. To address this issue, recent work \citep{xu2023optimal} proposed an indirect approach which estimates the conditional average treatment effects (CATE) with the double encoder model. This method has been shown to achieve both empirically and theoretically efficient estimation for combination treatments. Among the direct approaches, \citep{liang2018estimating, ye2023stage} utilized the Hamming loss in place of the 0-1 loss, treating the ITR estimation as a weighted multi-label classification problem. Under this formulation, we only need to estimate the decision rules for each single treatment, which greatly reduces the model complexity and addresses the inefficiency issue in multicategory ITRs. However, a parsimonious model may be incapable of incorporating interaction effects among combination treatments. Specifically, \citep{liang2018estimating} could be undermined when the interaction effects are non-negligible. Therefore, it is essential to develop direct methods that offer flexible modeling and are capable of incorporating interaction effects among combination treatments.

In this paper, we introduce a novel Multi-Label Residual Weighted Learning (MLRWL) framework for estimating the optimal ITR for combination treatments. Specifically, we propose using the generalized $\psi$-loss as a non-convex surrogate for the 0-1 loss in the multi-label classification problem, with the optimal ITR derived as the minimizer of the weighted generalized $\psi$-loss. The proposed method has two main advantages over the Hamming hinge loss considered in \citep{liang2018estimating}. First, the generalized $\psi$-loss guarantees the Fisher consistency, regardless of whether interaction effects are present. In particular, the minimizer of the weighted generalized $\psi$-loss exhibits sign consistency with the optimal ITR, a property that holds at any intensity level of interaction effects. This property is especially valuable in real applications, where the intensity of interaction effects could be unknown. Second, the generalized $\psi$-loss can accommodate negative or shifted outcomes to stabilize the empirical performance. In theory, we demonstrate that the Fisher consistency and the consistency of the proposed estimator are preserved given negative and shifted outcome weights. In contrast, a convex surrogate loss, such as the Hamming hinge loss, can only accommodate positive weights to preserve its convexity, potentially limiting its applicability.

Computationally, the non-convex generalized $\psi$-loss can be formulated as the difference between two convex functions. Therefore, the minimization of the weighted generalized $\psi$-loss can be solved efficiently by the difference of the convex (DC) algorithm \citep{tao1988duality} iteratively. Notably, the subproblem within each iteration is a quadratic programming problem for both linear and nonlinear decision rules, which can be solved by the quadratic programming solver. We show that estimators obtained through the DC algorithm are  stationary points. Our numerical studies indicate that the proposed method achieves superior performance compared with existing ITR approaches for combination treatment problems.

The rest of the article is organized as follows. In Section 2, we introduce the background of the ITR problem and existing works. In Section 3, we propose Multi-Label Residual Weighted Learning with the generalized $\psi$-loss. Algorithms and implementation details for linear and nonlinear decision rules are also illustrated. In Section 4, the theoretical properties of the generalized $\psi$-loss are provided. In Sections 5 and 6, we present numerical studies to evaluate the empirical performance of the proposed method in simulation settings and a real application to a type-2 diabetes study.

\color{black}
\section{Background}
In this paper, we focus on estimating an individualized treatment rule (ITR) for combination treatments using clinical experiment data. The variables of interest are $(\mathbf{X}, \mathbf{A}, Y)$, where $\mathbf{X} \in \mathcal{X} \subset \mathbb{R}^{p}$ represents pre-treatment covariates, $\mathbf{A}= (A^{(1)}, A^{(2)}, ..., A^{(K)}) \in \mathcal{A} = \{-1, 1\}^{K}$ denotes the combination treatments consisting of up to $K$ single treatments, and $Y\in \mathbb{R}$ is the observed outcome. We assume that a larger $Y$ indicates a more desirable outcome, and use $Y(\mathbf{A})$ to represent the potential outcome \citep{rubin1974estimating} under the treatment assignment $\mathbf{A}$. Since only one of the potential outcomes can be observed for each subject, it is infeasible to  recommend the subject-wise optimal treatment. Instead, our goal is to learn an ITR $d(\cdot): \mathcal{X} \rightarrow \mathcal{A}$ through maximizing the average outcome across the population. Here, the expected potential outcome under ITR $d(\cdot)$ is also termed as the value function \citep{qian2011performance} with respect to $d(\cdot)$:
\begin{align}
\label{value_func}
    \mathcal{V}(d) = \mathbb{E}[Y\{d(\mathbf{X})\}].
\end{align}

To estimate the ITR $d(\cdot)$ from clinical experiments, we rely on the following standard causal assumptions \citep{hernan2010causal}: \\
(a) Stable Unit Treatment Value Assumption (SUTVA): $Y = Y(\mathbf{A})$; \\
(b) No unmeasured confounders: $\mathbf{A} \indep Y(\mathbf{a})|\mathbf{X}$ for any $\mathbf{a} \in \mathcal{A}$; \\
(c) Positivity: $\mathbb{P}(\mathbf{A} = \mathbf{a}|\mathbf{X})\ge p_{\mathcal{A}} > 0$ for any $\mathbf{a} \in \mathcal{A}$, $\mathbf{X} \in \mathcal{X}$. \\
Under these causal assumptions, the optimal ITR $d^*(\cdot)$ satisfies
\begin{align}
    \label{value_func2}
    d^*(\cdot) = \argmax_{d(\cdot)}\mathcal{V}(d) = \argmax_{d(\cdot)}\mathbb{E}\left[\frac{Y}{\mathbb{P}(\mathbf{A}|\mathbf{X})}\mathbb{I}(\mathbf{A} = d(\mathbf{X}))\right],
\end{align}
where $\mathbb{I}(\cdot)$ is the indicator function and $\mathbb{P}(\mathbf{A}|\mathbf{X})$ is the propen
sity score \citep{hirano2003efficient}. Moreover, maximizing the value function (\ref{value_func2}) is equivalent to minimizing the following risk:
\begin{align}
    \label{loss_func}
    d^*(\cdot) = \argmin_{d(\cdot)}\mathcal{R}(d)
    = \argmin_{d(\cdot)}\mathbb{E}\left[\frac{Y}{\mathbb{P}(\mathbf{A}|\mathbf{X})}\mathbb{I}(\mathbf{A} \neq d(\mathbf{X}))\right], 
\end{align}
which is equivalent to a weighted classification problem, with $\mathbf{A}$ being the response comprised of $2^K$ distinct classes and weights given by $Y/\mathbb{P}(\mathbf{A}|\mathbf{X})$. However, directly minimizing the risk (\ref{loss_func}) is an NP-hard problem due to the non-smoothness of the indicator function $\mathbb{I}(\cdot)$. More intricate than binary or multicategory treatment problems, minimizing (\ref{loss_func}) for combination treatments encounters the curse of the dimensionality issue. As the number of single treatments $K$ grows, the number of possible combination treatments grows exponentially, which requires a rather complex model $d(\cdot)$ as the decision rule. Consequently, the estimation efficiency is undermined in combination treatment problems, especially those with a large $K$.

Since the combination treatment $\mathbf{A}$ can be considered as a $K$-dimensional binary response, it is natural to approach the problem (\ref{loss_func}) from the multi-label classification perspective \citep{liang2018estimating, tsoumakas2007multi}. Each treatment $A^{(k)}$ can be treated as a binary response, indicating whether the $k$th treatment was assigned ($A^{(k)}=1$) or not ($A^{(k)}=-1$). Therefore, we can decompose the ITR $d(\cdot)$ into $K$ decision rules: $d^{(1)}(\cdot), ..., d^{(K)}(\cdot)$, where $d^{(k)}(\cdot)$ decides whether the $k$th treatment should be assigned or not. In contrast to the multicategory classification requiring $2^{K}$ decision rules, $K$ decision rules are sufficient under the multi-label classification framework.

There are two mainstream strategies to tackle multi-label classification in the literature: the first strategy is the so-called binary relevance \citep{luaces2012binary, boutell2004learning}, which treats each label $A^{(k)}$ as an independent binary label and builds independent binary classifiers for each label. In combination treatment problems, the combination of multiple treatments could potentially induce additional interaction effects. These effects can be either synergistic or antagonistic effects in nature. As a result, the outcome $Y$ is largely contingent on the holistic treatment assignment $\mathbf{A}$, rather than solely on the individual assignment $A^{(k)}$. Therefore, it is risky to adopt the binary relevance strategy in combination treatment problems which may ignore the considerable interaction effects.
 
Second, it is prevalent to propose an appropriate loss function, particularly convex surrogate losses, to replace the 0-1 loss. The convexity property promotes an efficient computation algorithm that guarantees global optimality. However, the improved computational efficiency may come at the cost of compromised statistical properties. For instance, \citep{liang2018estimating} combines the Hamming and hinge losses as a convex surrogate loss to tackle the combination treatment problem. However, the minimizer of their surrogate loss does not guarantee a Fisher-consistent estimation of the optimal ITR $d^*(\cdot)$, especially when significant interaction effects exist among treatments. In summary, it is critical to identify a suitable surrogate loss that guarantees sound statistical properties such as Fisher consistency while achieving efficient computation.

\section{Methodology}

In Section \ref{sec: mlrwl}, we introduce the proposed residual weighted learning framework from the perspective of multi-label classification. Section \ref{sec: alg_imp} introduces the algorithm and implementation details for linear and nonlinear decision rules, respectively.

\subsection{Multi-Label Residual Weighted Learning}
\label{sec: mlrwl}

In this section, we introduce a novel non-convex surrogate loss, the generalized $\psi$-loss, that targets the weighted multi-label classification to estimate the optimal ITR for combination treatments. 

Specifically, the generalized $\psi$-loss associated risk, named as $\psi$-risk, is defined as follows:
\begin{align}
\label{psi_loss_func}
\begin{split}
    \mathcal{R}_{\psi}(f) &= \mathbb{E}\left[\frac{Y}{\mathbb{P}(\mathbf{A}|\mathbf{X})}\psi(Z^{(1)}, ..., Z^{(K)})\right] \\
    &= \mathbb{E}\left[\frac{Y}{\mathbb{P}(\mathbf{A}|\mathbf{X})}\{T_1(Z^{(1)}, ..., Z^{(K)}) - T_0(Z^{(1)}, ..., Z^{(K)})\}\right],
\end{split}
\end{align}
where $f(\cdot) = (f^{(1)}(\cdot), ..., f^{(K)}(\cdot))$ and $f^{(k)}(\cdot): \mathcal{X}\rightarrow\mathbb{R}$ to represent the decision function for the $k$th treatment, and the $k$th decision rule is followed by $d^{(k)}(\mathbf{X}) = \text{sign}(f^{(k)}(\mathbf{X}))$. The intermediate variables $Z^{(k)}$'s are defined as $Z^{(k)} = A^{(k)}f^{(k)}(\mathbf{X})$, where $Z^{(k)} > 0$ indicates that the $k$th label $A^{(k)}$ is correctly classified, and $Z^{(k)} \le 0$ signifies misclassification. 

The proposed generalized $\psi$-loss is denoted as $\psi(\ldots)$, composed of two parts with the same form $T_s(Z^{(1)}, ..., Z^{(K)}) = \max(s - Z^{(1)}, ..., s - Z^{(K)}, 0)$ for $s = 0, 1$. This loss is a generalization of the $\psi$-loss \citep{liu2006multicategory} which targets the binary or multicategory classification. The first term, $T_1(Z^{(1)}, ..., Z^{(K)})$, is an extension of the hinge loss to multi-label settings, which can be also formulated as follows:
\begin{align}
    T_1(Z^{(1)}, \ldots, Z^{(K)}) &= \max(1 - Z^{(1)}, \ldots, 1-Z^{(K)}, 0) \notag \\
    &= \max_{k}\bigg\{\max(1 - Z^{(k)}, 0)\bigg\}, \notag
\end{align}
which characterizes the largest hinge loss over all labels. We also term $T_1(\ldots)$ as the generalized hinge loss for (weighted) multi-label classification, which is a convex surrogate loss of the 0-1 loss. The second term has the same shape as $T_1(\ldots)$ but passes through the origin. Subtracting $T_0(\ldots)$ from $T_1(\ldots)$ is equivalent to truncating the generalized hinge loss $T_1(\ldots)$ at 1 if $Z^{(k)} \ge 0$ for any $k \in \{1, ..., K\}$. A visual comparison of the generalized hinge loss and the generalized $\psi$-loss in $2$-label scenarios is shown in Figure \ref{fig: loss_comp}. 

\begin{figure}[!t]
    \centering
    \includegraphics[width=300pt]{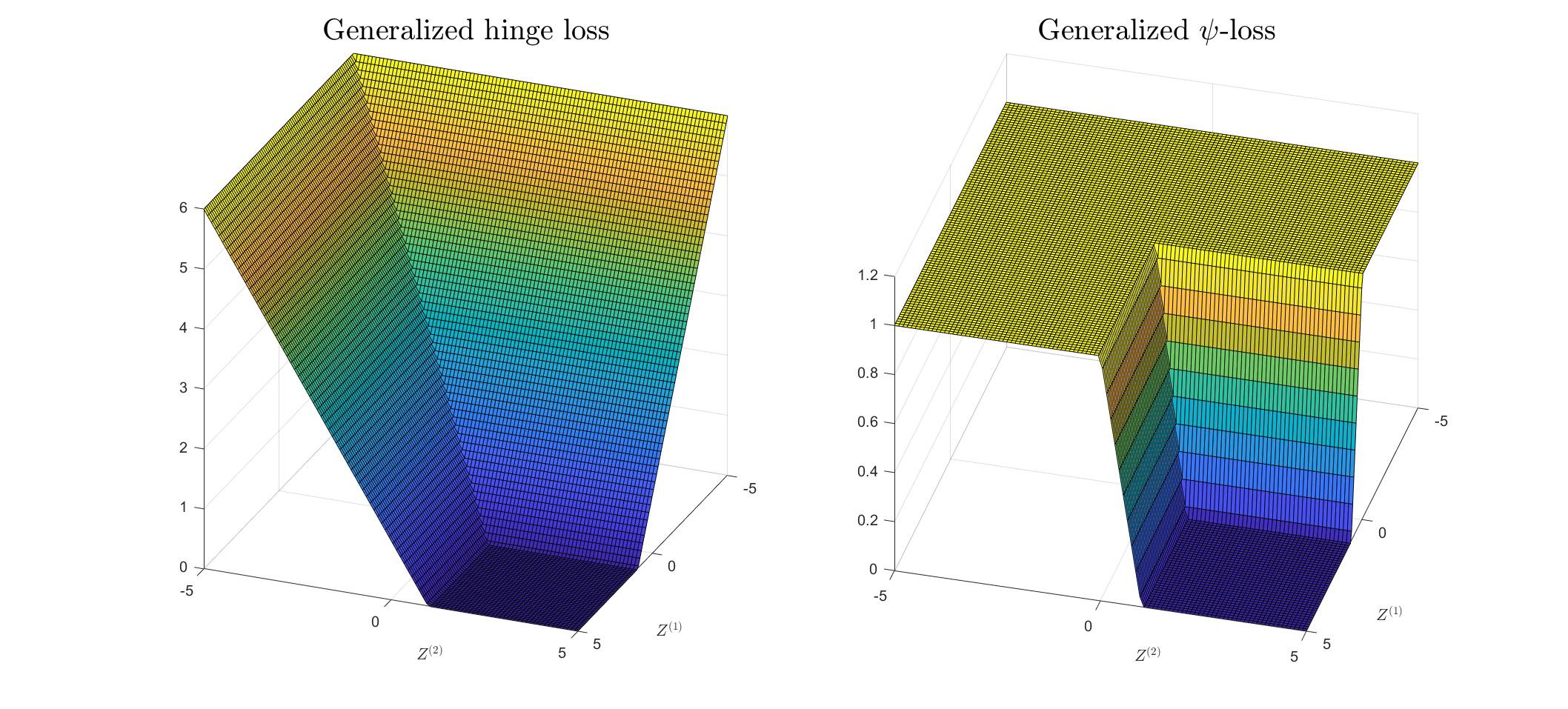}
    \caption{Illustration and comparison of the generalized hinge loss and the generalized $\psi$-loss in the 2-label classification scenario.}
    \label{fig: loss_comp}
\end{figure}

As shown in Figure \ref{fig: loss_comp}, the generalized $\psi-$loss $\psi(Z^{(1)}, ..., Z^{(K)}) = 0$ if and only if $Z^{(k)} > 1$ for all $k = 1, ..., K$; conversely, $\psi(Z^{(1)}, ..., Z^{(K)}) = 1$ if any one of the $Z^{(k)}$'s is negative. In other words, the generalized $\psi-$loss is minimized only if all the decision functions $f^{(k)}(\cdot)$'s perfectly assign the observed treatments:
\begin{align}
   |f^{(k)}(\mathbf{X})| > 1 \text{ and sign}(f^{(k)}(\mathbf{X})) = A^{(k)} \text{ for all }k = 1, ..., K. \notag
\end{align}
In order to minimize the $\psi-$risk (\ref{psi_loss_func}), $\psi(Z^{(1)}, ..., Z^{(K)})$ is expected to be minimal for large weights $\frac{Y}{\mathbb{P}(\mathbf{A}|\mathbf{X})}$, and the decision functions $f^{(k)}(\cdot)$'s are expected to align with the associated treatment assignment $\mathbf{A}$. Therefore, the overall treatment effects of the combination treatments, including treatment effects of single treatments and the induced interaction effects, affect the decision rules simultaneously. This property indeed guarantees the Fisher consistency of the proposed method, irrespective of the intensity of interaction effects. In contrast, each single treatment contributes to the Hamming hinge loss \citep{liang2018estimating}, leading the decision rules to rely more on the treatment effects of single treatments, so the Fisher consistency of their method is only achieved with minor interaction effects.

\subsubsection{Outcome Shift}
\label{sec: outcome_shift}
A significant drawback of the outcome-weighted learning framework \citep{zhao2012estimating, liang2018estimating} is that the value of $Y$ must be positive to preserve the convexity of the surrogate loss function. Empirically, it is possible to shift the outcome so that the assumption is satisfied; however, the shift of the outcome may impact the empirical performance of the algorithm. We refer readers to \citep{zhou2017residual} for a detailed discussion about this potential issue. 

The proposed generalized $\psi$-loss is also sensitive to the shift of the outcome $Y$, and the impact of the outcome shift for the combination treatment problems is rather significant. Specifically, the minimizer of the $\psi$-risk under outcome weights and the minimizer of the $\psi$-risk under shifted outcome weights $Y - g(\mathbf{X})$'s are not necessarily equivalent, because: $\mathcal{R}_{\psi}(f) \neq \mathcal{R}_{\psi, g}(f) + \text{constant}$ where $\mathcal{R}_{\psi, g}(f)$ is defined as follows:
\begin{align}
    \mathcal{R}_{\psi, g}(f) = \mathbb{E}\left[\frac{Y-g(\mathbf{X})}{\mathbb{P}(\mathbf{A}|\mathbf{X})}\psi(Z^{(1)}, ..., Z^{(K)})\right],
\end{align}
where $g(\mathbf{X})$ is a measurable function. Therefore, minimizing the $\psi$-risk given the shifted outcome weights $Y - g(\mathbf{X})$ might result in decision rules other than the optimal decision rules derived from (\ref{psi_loss_func}). Therefore, selecting an appropriate shift is crucial for maintaining statistical consistency and optimality of the decision rules. 

In this work, we employ the treatment-free effects $g(\mathbf{X})=\frac{1}{|\mathcal{A}|}\sum\mathbb{E}[Y|\mathbf{X}, \mathbf{A}]$ as a functional shift, and the inverse probability weighted residual $\frac{Y-g(\mathbf{X})}{\mathbb{P}(\mathbf{A}|\mathbf{X})}$ is regarded as the weight in the multi-label classification. There are two main reasons for choosing $g(\mathbf{X})$ as the treatment-free effects: First, it does not change the relative orders of the conditional average treatment effects over all possible treatments:
\begin{align}
    &\mathbb{E}[Y|\mathbf{X}, \mathbf{A}=\mathbf{a}] - \mathbb{E}[Y|\mathbf{X}, \mathbf{A}=\mathbf{a}'] \notag \\
    = &\mathbb{E}[Y-g(\mathbf{X})|\mathbf{X}, \mathbf{A}=\mathbf{a}] - \mathbb{E}[Y-g(\mathbf{X})|\mathbf{X}, \mathbf{A}=\mathbf{a}'], \quad \text{for all }\mathbf{a}, \mathbf{a}' \in \mathcal{A}, \notag
\end{align}
which is a sufficient condition to guarantee the Fisher consistency property, which will be elaborated in Section \ref{sec: theory}. Second, it leads to a straightforward interpretation: for treatments associated with above-average treatment effects, the decision rules are expected to match these treatments; for the treatment associated with below-average treatment effects, the decision rules are instead encouraged to deviate from these treatments.

Empirically, given the i.i.d samples $(\mathbf{x}_i, \mathbf{a}_i, y_i)_{i=1}^{n}$, we can estimate the ITR by minimizing 
\begin{align}
\label{empirical_loss}
    \min_{f^{(1)}, ..., f^{(K)}}\sum_{i=1}^{n}\frac{y_i - g(\mathbf{x}_i)}{\mathbb{P}(\mathbf{a}_i|\mathbf{x}_i)}\psi(z_i^{(1)}, ..., z_i^{(K)}) + \mathcal{P}_{\lambda}(f),
\end{align}
where $g(\cdot)$ is the treatment-free effects, $\mathbb{P}(\cdot|\cdot)$ is the propensity score in the clinical trial experiment. In observational study, both treatment-free effects and true propensity scores are unknown to us, so working models of $g(\mathbf{x})$ and $\mathbb{P}(\mathbf{a}|\mathbf{x})$ are needed. In Appendix \ref{A.8}, we discuss the estimation of working models which can be plugged into (\ref{empirical_loss}) to estimate decision rules. The penalty function $\mathcal{P}_{\lambda}(f)$ determines the function space of $f$ and controls its complexity. More detailed discuss of model specification is introduced in Section \ref{sec: alg_imp}. Once we obtain the estimated $\hat{f}^{(1)}, ..., \hat{f}^{(K)}$, the estimated ITR $\hat{d}$ is given by:
\begin{align}
    \hat{d}^{(k)}(\mathbf{x}) = \text{sign}(\hat{f}^{(k)}(\mathbf{x})). \notag
\end{align}

\subsection{Algorithm and Implementation}
\label{sec: alg_imp}

In this section, we introduce the algorithm and implementation details of the proposed method, under both the linear and nonlinear decision rules, respectively. Since the generalized $\psi$-loss is a non-convex surrogate loss, commonly adopted convex algorithms are not applicable. Nevertheless, the generalized $\psi$-loss enables a decomposition which can be represented as the difference of two convex functions, in which the difference of the convex algorithm \citep{nam2017minimizing, tao1988duality} is applicable for efficient computation. 

Suppose the decision functions $f^{(1)}, ..., f^{(K)}$ are parameterized by $\boldsymbol\beta_{k}$'s respectively. Then the empirical loss (\ref{empirical_loss}) can be reformulated as 
\begin{align}
\label{loss_decomp}
\begin{split}
    \mathcal{L}(\boldsymbol\beta) &= \sum_{i=1}^{n}w_i\bigg\{T_1(\boldsymbol\beta; \mathbf{x}_i, \mathbf{a}_i) - T_0(\boldsymbol\beta; \mathbf{x}_i, \mathbf{a}_i)\bigg\} + \frac{\lambda}{2}\sum_{k=1}^{K}\mathcal{P}(\boldsymbol\beta_k)  \\
    &= \underbrace{\frac{\lambda}{2} \sum_{k=1}^{K}\mathcal{P}(\boldsymbol\beta_k) + \sum_{i=1}^{n}|w_i|\bigg\{T_1(\boldsymbol\beta; \mathbf{x}_i, \mathbf{a}_i)\mathbb{I}(w_i \ge 0) + T_0(\boldsymbol\beta; \mathbf{x}_i, \mathbf{a}_i)\mathbb{I}(w_i < 0)\bigg\}}_{\text{Convex part}: \mathcal{L}_{\text{cvx}}}  \\
    &+ \underbrace{\sum_{i=1}^{n}-|w_i|\bigg\{T_1(\boldsymbol\beta; \mathbf{x}_i, \mathbf{a}_i)\mathbb{I}(w_i < 0) + T_0(\boldsymbol\beta; \mathbf{x}_i, \mathbf{a}_i)\mathbb{I}(w_i \ge 0)\bigg\}}_{\text{Concave part}: \mathcal{L}_{\text{cave}}}, 
\end{split}
\end{align}
where $\boldsymbol\beta = (\boldsymbol\beta_1, ..., \boldsymbol\beta_K)$ is the collection of parameters of the decision functions, and the weight $\frac{y_i - \hat{m}(\mathbf{x}_i)}{\hat{\mathbb{P}}(\mathbf{a}_i|\mathbf{x}_i)}$ is denoted as $w_i$ for ease of notation. The penalty function $\mathcal{P}(\boldsymbol\beta_k)$ is a convex penalty function associated with the type of the decision rule, and $\lambda$ serves as the tuning parameter for the penalty. 

Since the loss function $\mathcal{L}(\boldsymbol\beta)$ can be decomposed into two parts, $\mathcal{L}_{\text{cvx}}(\boldsymbol\beta)$ and $\mathcal{L}_{\text{cave}}(\boldsymbol\beta)$, we can employ the difference of the convex algorithm \citep{nam2017minimizing, tao1988duality} to obtain the estimation of $\boldsymbol\beta$. Specifically, at the $t$ th iteration, the subproblem is minimizing a linear minorization \citep{liu2005multicategory} of the loss function $\mathcal{L}(\boldsymbol\beta)$:
\begin{align}
    \boldsymbol\beta^{(t)} = \argmin_{\boldsymbol\beta} \mathcal{L}_{\text{cvx}} (\boldsymbol\beta) + <\nabla_{\boldsymbol\beta}\mathcal{L}_{\text{cave}}(\boldsymbol\beta^{(t-1)}), \boldsymbol\beta - \boldsymbol\beta^{(t-1)}>, \notag
\end{align}
where $\nabla_{\boldsymbol\beta}\mathcal{L}_{\text{cave}}(\boldsymbol\beta^{(t-1)})$ is the sub-gradient of $\mathcal{L}_{\text{cave}}(\boldsymbol\beta)$ at the iterated $\boldsymbol\beta^{(t-1)}$, and $<\cdot, \cdot>$ denotes the inner product. The algorithm for minimizing $\mathcal{L}(\boldsymbol\beta)$ is summarized as follows:

\begin{algorithm}
\caption{Difference of convex algorithm for minimizing $\mathcal{L}(\boldsymbol\beta)$}
\label{dc_alg}
\begin{algorithmic}
\State Initialize $\boldsymbol\beta^{(0)}$, set maximum iteration $T$
\FOR{$t = 1, 2, ..., T$}
    \STATE Compute the subgradients $\nabla_{\boldsymbol\beta}\mathcal{L}_{\text{cave}}(\boldsymbol\beta^{(t-1)})$
    \STATE Update $\boldsymbol\beta^{(t)}$ by solving the convex optimization problem:
    \STATE \hspace{1cm} $\min \mathcal{L}_{\text{cvx}} (\boldsymbol\beta) + <\nabla_{\boldsymbol\beta}\mathcal{L}_{\text{cave}}(\boldsymbol\beta^{(t-1)}), \boldsymbol\beta - \boldsymbol\beta^{(t-1)}>$
\ENDFOR
\STATE Output $\boldsymbol\beta^{(t)}$ if $\sum_{k=1}^{K}\lVert \beta^{(t)}_{k} - \beta^{(t-1)}_{k}\rVert_2 \le \epsilon$; where $\epsilon$ is a pre-specified threshold; Otherwise, output $\boldsymbol\beta^{(T)}$
\end{algorithmic}
\end{algorithm}

The above algorithm guarantees that the convergent point obtained from the iterations is the stationary point of $\mathcal{L}(\boldsymbol\beta)$ if the initial value $\boldsymbol\beta$ satisfies certain conditions, the detailed result and conditions can be found in the Appendix \ref{A.3}.

In order to minimize (\ref{loss_decomp}), which includes complex $\max$ operator and indicator functions,  we introduce the slack variables $\eta_i$'s to convert the $T_1(\boldsymbol\beta; \mathbf{x}_i, \mathbf{a}_i)$ and $T_0(\boldsymbol\beta; \mathbf{x}_i, \mathbf{a}_i)$ into linear constraints, then the convex optimization problem within each iteration of Algorithm \ref{dc_alg} is equivalent to the following problem:

\begin{align}
\label{primal_quad_prob}
\scriptsize
\begin{split}
    \min_{\boldsymbol\beta} & \quad \sum_{k=1}^{K}\mathcal{P}(\boldsymbol\beta_k) + \gamma \sum_{i=1}^{n}|w_i|\eta_i + \gamma \sum_{k=1}^{K}<\nabla_{\boldsymbol\beta_{k}}\mathcal{L}_{\text{cave}}(\hat{\boldsymbol\beta}^{(t-1)}), \boldsymbol\beta_k> \\
    s.t. & \quad \eta_i \ge \mathbb{I}(w_i > 0) - a_i^{(k)}f^{(k)}(\mathbf{x}_i), \quad \text{for any } k = 1,...,K \\
    & \quad \eta_i \ge 0;
\end{split}
\end{align}
where $\gamma$ is a constant depending on $\lambda$, and $\nabla_{\beta_{k}}\mathcal{L}_{\text{cave}}(\boldsymbol\beta)$ is the subgradient of the concave part $\mathcal{L}_{\text{cave}}$ with respect to $\boldsymbol\beta_{k}$:
\begin{align}
{\scriptsize
\label{subgrad}
\nabla_{\boldsymbol\beta_{k}}\mathcal{L}_{\text{cave}}(\boldsymbol\beta) = 
\begin{cases}
\sum_{i=1}^{n}\mathbb{I}(w_i \ge 0)|w_i|\nabla_{\boldsymbol\beta_k}f^{(k)}(\mathbf{x}_i), \\
\hspace{1.5cm} \text{if } k = \argmax_{l}\{- a_i^{(l)}f^{(l)}(\mathbf{x}_i)\} \text{ and }1 - a_i^{(k)}f^{(k)}(\mathbf{x}_i) > 0,  \\
\sum_{i=1}^{n}\mathbb{I}(w_i < 0)|w_i|\nabla_{\boldsymbol\beta_k}f^{(k)}(\mathbf{x}_i), \\
\hspace{1.5cm} \text{if } k = \argmax_{l} \{- a_i^{(l)}f^{(l)}(\mathbf{x}_i)\}\text{ and } - a_i^{(k)}f^{(k)}(\mathbf{x}_i) > 0, \\
0, \hspace{1.5cm} \text{otherwise}.
\end{cases}
}
\end{align}

In the following, we provide the implementation details for linear decision rules in Section \ref{sec: ldr}, and then generalize it to the nonlinear decision rules in Section \ref{sec: nldr}.

\subsubsection{Linear Decision Rule for Optimal ITR}
\label{sec: ldr}

Consider the linear decision rules $f^{(k)}(\mathbf{x})$ as follows:
\begin{align}
    f^{(k)}(\mathbf{x}) = \beta_{0k} + \mathbf{x}^T\boldsymbol\beta_{1k}, \notag
\end{align}
where $\boldsymbol\beta_{1k}\in\mathbb{R}^{p}$ and $\beta_{0k}\in\mathbb{R}$. Then the associated ITR $d(\cdot)$ assigns a subject with $\mathbf{x}$ to the $k$th treatment if $\beta_{0k} + \mathbf{x}^T\boldsymbol\beta_{1k} > 0$ and does not assign the $k$th treatment otherwise. For the linear decision functions, we define the penalty function as the Euclidean norm $\mathcal{P}(\boldsymbol\beta_k) = \lVert\boldsymbol\beta_{1k}\rVert_2^2$. Then, the convex programming problem (\ref{primal_quad_prob}) can be rewritten as 
\begin{align}
\label{primal_linear_prob}
\scriptsize
\begin{split}
    \min_{\boldsymbol\beta} & \quad \frac{1}{2}\sum_{k=1}^{K}\lVert\boldsymbol\beta_{1k}\rVert_2^2 + \gamma \sum_{i=1}^{n}|w_i|\eta_i + \gamma \sum_{k=1}^{K}<\nabla_{\boldsymbol\beta_{1k}}\mathcal{L}_{\text{cave}}(\hat{\boldsymbol\beta}^{(t-1)}), \boldsymbol\beta_{1k}> +  \\
    &\gamma \sum_{k=1}^{K}\nabla_{\beta_{0k}}\mathcal{L}_{\text{cave}}(\hat{\boldsymbol\beta}^{(t-1)})\beta_{0k} \\
    s.t. & \quad \eta_i \ge \mathbb{I}(w_i > 0) - a_i^{(k)}(\beta_{0k} + \mathbf{x}_i^T\boldsymbol\beta_{1k}), \quad \text{for any } k = 1,...,K \\
    & \quad \eta_i \ge 0,
\end{split}
\end{align}
which is a quadratic programming with decision function parameters $\boldsymbol\beta_{1k}$'s, $\beta_{0k}$'s and slack variables $\eta_i$'s, where $\eta_i$'s are associated with individual-wise linear constraints. In the scenarios with large sample size $n$ and relatively small dimension of covariates $p$, it is computationally efficient to solve the primal form (\ref{primal_linear_prob}) directly. Otherwise, it is preferable to solve the dual form by introducing the Lagrange multipliers $\theta_{ik}$'s. Specifically, the dual form of (\ref{primal_linear_prob}) can be formulated as follows:
\begin{align}
\label{dual_linear_prob}
\scriptsize
\begin{split}
    \min_{\lambda} &\quad \frac{1}{2}\sum_{k=1}^{K}\sum_{i=1}^{n}\sum_{j=1}^{n}\theta_{ik}\theta_{jk}a_{i}^{(k)}a_{j}^{(k)}\mathbf{x}_i^T\mathbf{x}_j - \gamma \sum_{k=1}^{K}\sum_{i=1}^{n}\theta_{ik}a_{i}^{(k)}\mathbf{x}_i^T\nabla_{\boldsymbol\beta_{1k}} \mathcal{L}_{\text{cave}}(\hat{\boldsymbol\beta}^{(t-1)}) -\\
    &\sum_{k=1}^{K}\sum_{i=1}^{n}\theta_{ik}I(w_i\ge 0) \\
    s.t. &\quad \sum_{k=1}^{K}\theta_{ik} \le \gamma |w_i|, \quad \gamma \nabla_{\beta_{0k}}\mathcal{L}_{\text{cave}}(\hat{\boldsymbol\beta}^{(t-1)}) = \sum_{i=1}^{n}\theta_{ik}a_{ik}, \quad \theta_{ik}\ge 0, 
\end{split}    
\end{align}
which can be solved by quadratic programming solvers such as Gurobi \citep{gurobi}. Given the estimated $\theta_{ik}$'s, we can derive the estimated $\boldsymbol\beta_{1k}$'s and $\beta_{0k}$'s, where the detailed derivations are provided in the Appendix \ref{A.1}. 

\subsubsection{Nonlinear Decision Rule for Optimal ITR}
\label{sec: nldr}

In the following, we consider the nonlinear decision functions as follows:
\begin{align}
    f^{(k)}(\mathbf{x}) = \beta_{0k} + \sum_{i=1}^{n}\mathcal{K}(\mathbf{x}, \mathbf{x}_i)\beta_{ik}, \notag
\end{align}
where $\mathcal{K}(\cdot, \cdot)$ is a valid kernel function associated with a reproducing kernel Hilbert space $\mathcal{H}_{\mathcal{K}}$. Therefore, $f^{(k)}(\mathbf{x})$ can represent nonlinear functions embedded by $\mathcal{H}_{\mathcal{K}}$ with a shift $\beta_{0k}$. The norm in $\mathcal{H}_{\mathcal{K}}$, denoted as $\lVert\cdot\rVert_{\mathcal{K}}$, is induced by the inner product:
\begin{align}
    <f, g>_{\mathcal{K}} = \sum_{i=1}^{n}\sum_{j=1}^{m}\alpha_i\beta_j\mathcal{K}(\mathbf{x}_i, \mathbf{x}_j), \notag
\end{align}
for $f(\cdot) = \sum_{i=1}^{n}\alpha_i\mathcal{K}(\mathbf{x}_i, \cdot)$ and $g(\cdot) = \sum_{j=1}^{m}\beta_j\mathcal{K}(\mathbf{x}_j, \cdot)$. When we plug $f^{(k)}(\mathbf{x})$ into (\ref{primal_quad_prob}), the convex programming in the $t$-th iteration is formulated as follows:
\begin{align}
\label{primal_nonlinear_prob}
\scriptsize
\begin{split}
\min_{\boldsymbol\beta}&\quad\frac{1}{2}\sum_{k=1}^{K}\sum_{i=1}^{n}\sum_{j=1}^{n}\beta_{ik}\beta_{jk}\mathcal{K}(\mathbf{x}_i, \mathbf{x}_j) + \gamma\sum_{i=1}^{n}|w_i|\eta_i + \gamma \sum_{k=1}^{K}\sum_{i=0}^{n}<\nabla_{\beta_{ik}}\mathcal{L}_{\text{cave}}(\hat{\boldsymbol\beta}^{(t-1)}), \beta_{ik}> \\
s.t. &\quad \eta_i \ge \mathbb{I}(w_i > 0) - a_i^{(k)}(\beta_{0k} + \sum_{j=1}^{n}\mathcal{K}(\mathbf{x}_j, \mathbf{x}_i)\beta_{jk}), \quad \text{for any } k = 1, ..., K\\
&\quad \eta_i \ge 0.
\end{split}
\end{align}
Even though the $f^{(k)}(\mathbf{x})$ belongs to an infinite-dimensional space, it is computationally efficient to consider the dual form of the problem, which is also called the kernel trick \citep{hastie2009elements}. After the Lagrange multipliers $\theta_{ik}$'s are introduced, the dual form is formulated as follows:
\begin{align}
\label{dual_nonlinear_prob}
\scriptsize
\begin{split}
    \min_{\lambda} &\quad \frac{1}{2}\sum_{k=1}^{K}\sum_{i=1}^{n}\sum_{j=1}^{n}\theta_{ik}\theta_{jk}a_{i}^{(k)}a_{j}^{(k)}\mathcal{K}(\mathbf{x}_i, \mathbf{x}_j) - \gamma \sum_{k=1}^{K}\sum_{i=1}^{n}\theta_{ik}a_{i}^{(k)}\nabla_{\beta_k} \mathcal{L}_{\text{cave}}(\hat{\boldsymbol\beta}^{(t-1)}) -\\
    &\sum_{k=1}^{K}\sum_{i=1}^{n}\theta_{ik}I(w_i\ge 0) \\
    s.t. &\quad \sum_{k=1}^{K}\theta_{ik} \le \gamma |w_i|, \quad \gamma \nabla_{\beta_{0k}}\mathcal{L}_{\text{cave}}(\hat{\boldsymbol\beta}^{(t-1)}) = \sum_{i=1}^{n}\theta_{ik}a_{i}^{(k)}, \quad \theta_{ik}\ge 0,
\end{split}
\end{align}
which can also be solved by quadratic programming solvers. 

\section{Theoretical Properties}
\label{sec: theory}

In this section, we develop the theoretical properties of the estimated ITR under the weighted multi-label classification framework. Particularly, we establish the Fisher consistency under the proposed generalized $\psi$-risk, to guarantee that the optimizer of the $\psi$-risk is theoretically optimal. Furthermore, we also establish the excess risk bound and the consistency of the estimator within the reproducing kernel Hilbert space.

First, we establish the Fisher consistency under the outcome weighted learning framework. Specifically, the following result holds:

\begin{lemma}
\label{lemma: fisher_consistency}
For any measurable function $f: \mathcal{X}\rightarrow \mathbb{R}^{K}$, if $\hat{f}$ minimizes the $\psi$-risk $\mathcal{R}_{\psi}(f)$, then $d^*(\mathbf{x}) = \text{sign}(\hat{f}(\mathbf{x}))$, where $d^{*}(\cdot)$ is the optimal ITR given in (\ref{loss_func}).
\end{lemma}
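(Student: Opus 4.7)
The plan is to condition on $\mathbf{X}=\mathbf{x}$, reduce the minimization to a pointwise problem via the inverse‑propensity identity, and then exploit the sharp piecewise structure of $\psi$ to force the sign of any minimizer to match $d^{*}(\mathbf{x})$. By SUTVA, no unmeasured confounders, and positivity, for any bounded $h$,
\[
\mathbb{E}\!\left[\frac{Y\,h(\mathbf{A})}{\mathbb{P}(\mathbf{A}|\mathbf{X})}\,\bigg|\,\mathbf{X}=\mathbf{x}\right] \;=\; \sum_{\mathbf{a}\in\mathcal{A}}\mu(\mathbf{x},\mathbf{a})\,h(\mathbf{a}),
\]
where $\mu(\mathbf{x},\mathbf{a})=\mathbb{E}[Y\mid\mathbf{X}=\mathbf{x},\mathbf{A}=\mathbf{a}]$. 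Writing $\psi_{\mathbf{a}}(u)=\psi(a^{(1)}u_{1},\ldots,a^{(K)}u_{K})$ and $C(u;\mathbf{x})=\sum_{\mathbf{a}}\mu(\mathbf{x},\mathbf{a})\psi_{\mathbf{a}}(u)$, this turns $\mathcal{R}_{\psi}(f)$ into $\mathbb{E}[C(f(\mathbf{X});\mathbf{X})]$, so it suffices to show that any pointwise minimizer $\hat{u}$ of $C(\cdot;\mathbf{x})$ satisfies $\text{sign}(\hat{u})=d^{*}(\mathbf{x})=\argmax_{\mathbf{a}}\mu(\mathbf{x},\mathbf{a})$.

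Next I would unpack the piecewise form of $\psi$. With $z_{\min}=\min_{k}z_{k}$, the definition $\psi(z)=(1-z_{\min})_{+}-(-z_{\min})_{+}$ gives $\psi=0$ for $z_{\min}\ge 1$, $\psi=1-z_{\min}$ for $0<z_{\min}<1$, and $\psi=1$ for $z_{\min}\le 0$. For any $u$ with no zero coordinate, put $\mathbf{s}=\text{sign}(u)\in\mathcal{A}$. If $\mathbf{a}\neq\mathbf{s}$, some coordinate $a^{(k)}u_{k}$ is strictly negative, so $\psi_{\mathbf{a}}(u)=1$; if $\mathbf{a}=\mathbf{s}$, every coordinate equals $|u_{k}|>0$, giving $\psi_{\mathbf{s}}(u)\in[0,1)$ with equality to $0$ exactly when $\min_{k}|u_{k}|\ge 1$. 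Substituting yields the clean identity
\[
C(u;\mathbf{x}) \;=\; \sum_{\mathbf{a}}\mu(\mathbf{x},\mathbf{a}) \;-\; \mu(\mathbf{x},\mathbf{s})\bigl(1-\psi_{\mathbf{s}}(u)\bigr).
\]

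To finish, I would optimize this expression over $u$. Under the outcome‑weighted standing assumption $Y\ge 0$ (so $\mu(\mathbf{x},\mathbf{a})\ge 0$), maximizing the correction term $\mu(\mathbf{x},\mathbf{s})(1-\psi_{\mathbf{s}}(u))$ forces $\mathbf{s}=\argmax_{\mathbf{a}}\mu(\mathbf{x},\mathbf{a})=d^{*}(\mathbf{x})$ together with $\min_{k}|u_{k}|\ge 1$, at which point $\psi_{\mathbf{s}}(u)=0$ and $C$ attains its floor $\sum_{\mathbf{a}}\mu(\mathbf{x},\mathbf{a})-\max_{\mathbf{a}}\mu(\mathbf{x},\mathbf{a})$. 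The remaining boundary case where some $u_{k}=0$ forces $z_{\min}\le 0$ and hence $\psi_{\mathbf{a}}(u)\equiv 1$, giving $C(u;\mathbf{x})=\sum_{\mathbf{a}}\mu(\mathbf{x},\mathbf{a})$, strictly above the floor whenever $\max_{\mathbf{a}}\mu(\mathbf{x},\mathbf{a})>0$. Consequently any minimizer satisfies $\text{sign}(\hat{u})=d^{*}(\mathbf{x})$, and integrating over $\mathbf{X}$ gives the lemma.

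The hard part will be the collapse observation driving the second step. Because $\psi$ is non‑convex, standard convex‑calibration machinery does not apply, and a priori some mixed $u$ with intermediate $\psi_{\mathbf{a}}(u)$ values spread across several $\mathbf{a}$'s could potentially undercut the extremal sign‑aligned solution. What rescues the argument is the sharp fact that $\psi_{\mathbf{a}}(u)\equiv 1$ as soon as $\mathbf{a}$ disagrees with $\text{sign}(u)$ in even a single coordinate — precisely the property that fails for the Hamming hinge loss of \citep{liang2018estimating}, and what allows interaction effects of arbitrary intensity to be accommodated. A secondary subtlety worth flagging is that the clean conclusion needs $\mu(\mathbf{x},\mathbf{a})\ge 0$, consistent with the positivity convention of outcome weighting and motivating the residual shift introduced in Section \ref{sec: outcome_shift}.
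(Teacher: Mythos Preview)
Your proposal is correct and follows essentially the same route as the paper's proof in Appendix~\ref{A.4}: condition on $\mathbf{X}=\mathbf{x}$, use the inverse-propensity identity to reduce to $\sum_{\mathbf{a}}\mu(\mathbf{x},\mathbf{a})\psi_{\mathbf{a}}(u)$, observe that $\psi_{\mathbf{a}}(u)=1$ whenever $\mathbf{a}\neq\text{sign}(u)$, and then optimize the single remaining term. Your write-up is in fact a bit more careful than the paper's --- you make the $z_{\min}$ reduction explicit, flag the need for $\max_{\mathbf{a}}\mu(\mathbf{x},\mathbf{a})>0$ (which the paper's argument also implicitly uses), and dispose of the boundary case $u_k=0$ --- but the key collapse observation is identical.
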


Lemma \ref{lemma: fisher_consistency} provides the validity of using the generalized $\psi$-loss as the surrogate loss in the outcome weighted learning framework to estimate the optimal ITR. More importantly, there is no requirement for the intensity of the interaction effects among combination treatments as in \citep{liang2018estimating}, which is an advantage in estimating the ITR for combination treatments. As we emphasized in Section \ref{sec: mlrwl}, $\psi$-loss can incorporate interaction effects of any intensity, and guarantees the above property.

In the following, we show that the Fisher consistency holds under the residual weighted learning framework:

\begin{theorem}
\label{thm: fisher_consistency}
For $g(\mathbf{X}) = \frac{1}{|\mathcal{A}|}\sum_{\mathbf{A}\in\mathcal{A}}\mathbb{E}[Y|\mathbf{X}, \mathbf{A}]$, the minimizer $\hat{f}$ of the surrogate risk $\mathcal{R}_{\psi, g}(f)$ satisfies that $\text{sign}(\hat{f}(\mathbf{x})) = d^{*}(\mathbf{x})$ where $d^{*}(\mathbf{x})$ is the optimal ITR as in Lemma \ref{lemma: fisher_consistency}.
\end{theorem}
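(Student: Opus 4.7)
The plan is to reduce to a pointwise analysis at each $\mathbf{X}=\mathbf{x}$, paralleling the template of Lemma~\ref{lemma: fisher_consistency} but with shifted conditional means. Writing $u(\mathbf{a},\mathbf{x}) := \mathbb{E}[Y\mid\mathbf{X}=\mathbf{x},\mathbf{A}=\mathbf{a}]$ and $\tilde{u}(\mathbf{a},\mathbf{x}) := u(\mathbf{a},\mathbf{x}) - g(\mathbf{x})$, two algebraic facts drive the argument: first, by the definition of $g$ as a uniform average over $\mathcal{A}$, one has $\sum_{\mathbf{a}\in\mathcal{A}}\tilde{u}(\mathbf{a},\mathbf{x}) = 0$; second, $\argmax_{\mathbf{a}}\tilde{u}(\mathbf{a},\mathbf{x}) = \argmax_{\mathbf{a}}u(\mathbf{a},\mathbf{x}) = d^{*}(\mathbf{x})$ since $g(\mathbf{x})$ does not depend on $\mathbf{a}$.

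Applying assumptions (a)--(c) with the tower property gives
$$
\mathcal{R}_{\psi,g}(f) = \mathbb{E}_{\mathbf{X}}\!\left[\sum_{\mathbf{a}\in\mathcal{A}}\tilde{u}(\mathbf{a},\mathbf{X})\,\psi\bigl(a^{(1)}f^{(1)}(\mathbf{X}),\ldots,a^{(K)}f^{(K)}(\mathbf{X})\bigr)\right],
$$
so it suffices to minimize $\tilde{Q}(\mathbf{t};\mathbf{x}) := \sum_{\mathbf{a}}\tilde{u}(\mathbf{a},\mathbf{x})\,\psi(a^{(1)}t_1,\ldots,a^{(K)}t_K)$ over $\mathbf{t}\in\mathbb{R}^K$ for each $\mathbf{x}$ and verify that the sign of the minimizer equals $d^{*}(\mathbf{x})$. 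The structural observation about $\psi$ is that, for fixed $\mathbf{t}$ with all $t_k\neq 0$, any $\mathbf{a}\neq\text{sign}(\mathbf{t})$ makes $\min_k a^{(k)}t_k<0$ and hence $\psi(a^{(1)}t_1,\ldots,a^{(K)}t_K)=1$, while $\psi(a^{(1)}t_1,\ldots,a^{(K)}t_K)=\psi_0:=\max(1-\min_k|t_k|,0)\in[0,1]$ when $\mathbf{a}=\text{sign}(\mathbf{t})$.

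Substituting these values into $\tilde{Q}$ and using the zero-sum identity $\sum_{\mathbf{a}\neq\text{sign}(\mathbf{t})}\tilde{u}(\mathbf{a},\mathbf{x}) = -\tilde{u}(\text{sign}(\mathbf{t}),\mathbf{x})$ collapses the expression to
$$
\tilde{Q}(\mathbf{t};\mathbf{x}) = \tilde{u}(\text{sign}(\mathbf{t}),\mathbf{x})\,(\psi_0-1).
$$
Because $\psi_0-1\le 0$, minimizing $\tilde{Q}$ is equivalent to maximizing $\tilde{u}(\text{sign}(\mathbf{t}),\mathbf{x})(1-\psi_0)$, and this decouples into choosing $\text{sign}(\mathbf{t})=\argmax_{\mathbf{a}}\tilde{u}(\mathbf{a},\mathbf{x})=d^{*}(\mathbf{x})$ and pushing $|t_k|\ge 1$ to drive $\psi_0=0$. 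This forces $\text{sign}(\hat{f}(\mathbf{x}))=d^{*}(\mathbf{x})$, completing the argument.

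The main obstacle I anticipate is the bookkeeping for degenerate cases: when some $t_k=0$ the sign is ambiguous, and when $d^{*}(\mathbf{x})$ is non-unique several sign patterns are equally good. Both are handled in the standard way used in Lemma~\ref{lemma: fisher_consistency}, noting that $\tilde{u}(d^{*}(\mathbf{x}),\mathbf{x})\ge 0$ as the maximum of numbers averaging to zero, so the optimal pointwise value $-\tilde{u}(d^{*}(\mathbf{x}),\mathbf{x})$ is strictly below what any sign-misaligned $\mathbf{t}$ can attain whenever the optimum is unique; the negligible set where it is not is absorbed in the almost-sure interpretation of the conclusion.
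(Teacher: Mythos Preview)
Your argument is correct and follows essentially the same route as the paper's proof: both reduce to the pointwise conditional risk, use that $\psi=1$ for every $\mathbf{a}\neq\mathrm{sign}(f(\mathbf{x}))$ and $\psi=\psi_0\in[0,1]$ at $\mathbf{a}=\mathrm{sign}(f(\mathbf{x}))$, and then invoke the order-preserving property $\argmax_{\mathbf{a}}\tilde u(\mathbf{a},\mathbf{x})=\argmax_{\mathbf{a}}u(\mathbf{a},\mathbf{x})$. The only cosmetic difference is that you explicitly exploit the zero-sum identity $\sum_{\mathbf{a}}\tilde u(\mathbf{a},\mathbf{x})=0$ to collapse $\tilde Q$ to $\tilde u(\mathrm{sign}(\mathbf{t}),\mathbf{x})(\psi_0-1)$ and to justify $\tilde u(d^*(\mathbf{x}),\mathbf{x})\ge 0$, whereas the paper leaves the constant $\sum_{\mathbf{a}}\tilde u(\mathbf{a},\mathbf{x})$ in place and simply defers to the Lemma~\ref{lemma: fisher_consistency} argument; your version is slightly more self-contained but not materially different.
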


Theorem \ref{thm: fisher_consistency} guarantees that the generalized $\psi$-loss is a valid surrogate loss in the sense of Fisher consistency. As we mentioned in Section \ref{sec: outcome_shift}, an arbitrary choice of $g(\mathbf{X})$ may violate the Fisher consistency. Our choice of treatment-free effects retains the relative order of the CATE among all treatments so that Fisher consistency is also guaranteed. The detailed proofs of Lemma 1 and Theorem 1 are provided in Appendix \ref{A.4} and \ref{A.5}. 

Next, we establish the relationship between the excess risk under the proposed generalized $\psi$-loss and the 0-1 loss.

\begin{theorem}
\label{thm: excess_risk} 
For $f = (f^{(1)}, ..., f^{(K)})$ and any measurable $f^{(k)}: \mathcal{X}\rightarrow \mathbb{R}$, and any probability distribution for $(\mathbf{X}, \mathbf{A}, Y)$, we have 
\begin{align}
    \mathcal{R}\{\text{sign}(f)\} - \mathcal{R}^* \le \mathcal{R}_{\psi, g}(f) - \mathcal{R}^{*}_{\psi, g},
\end{align}
where the $\mathcal{R}_{\psi, g}$ denotes the risk $\mathcal{R}_{\psi, g}$ with $g(\cdot)$ as treatment-free effects. 
\end{theorem}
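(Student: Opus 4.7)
The plan is to reduce the inequality to a conditional pointwise comparison and then exploit the centering induced by the treatment-free shift. Conditioning on $\mathbf{X}$ and applying inverse-propensity weighting, both risks collapse into $\mathcal{R}(d) = \mathbb{E}_{\mathbf{X}}[\sum_\mathbf{a} Q(\mathbf{X},\mathbf{a})\mathbb{I}(\mathbf{a}\neq d(\mathbf{X}))]$ and $\mathcal{R}_{\psi,g}(f) = \mathbb{E}_{\mathbf{X}}[\sum_\mathbf{a} v_\mathbf{a}(\mathbf{X})\psi(\mathbf{a}\odot f(\mathbf{X}))]$, where $Q(\mathbf{x},\mathbf{a}) = \mathbb{E}[Y|\mathbf{X}=\mathbf{x},\mathbf{A}=\mathbf{a}]$ and $v_\mathbf{a}(\mathbf{x}) = Q(\mathbf{x},\mathbf{a}) - g(\mathbf{x})$. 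The explicit choice of $g$ as the treatment-free effect produces the key identity $\sum_{\mathbf{a}\in\mathcal{A}}v_\mathbf{a}(\mathbf{x}) = 0$, which drives the remainder of the argument. By Fubini, it then suffices to show the corresponding conditional inequality for almost every $\mathbf{x}$.

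I would next simplify each conditional excess risk using this centering. Because $\sum_\mathbf{a} v_\mathbf{a}(\mathbf{x}) = 0$, the conditional 0-1 risk at $\mathbf{x}$ reduces (after dropping a $d$-independent constant) to $-v_d(\mathbf{x})$, and its minimum over $d$ is attained at $d^*(\mathbf{x}) = \argmax_\mathbf{a} v_\mathbf{a}(\mathbf{x})$; thus the conditional 0-1 excess risk equals $v_{d^*}(\mathbf{x}) - v_{\text{sign}(f(\mathbf{x}))}(\mathbf{x})$. For the $\psi$-side, the structural fact I would exploit is that $\psi(Z^{(1)},\ldots,Z^{(K)}) = 1$ whenever any coordinate $Z^{(k)} \leq 0$ and $\psi \in [0,1]$ otherwise; since $a^{(k)}f^{(k)}(\mathbf{x}) < 0$ for some $k$ for every $\mathbf{a}\neq\text{sign}(f(\mathbf{x}))$, the conditional $\psi$-risk simplifies via centering to
\begin{align*}
L_\psi(\mathbf{x},f) = -\bigl(1 - \psi(|f(\mathbf{x})|)\bigr)\,v_{\text{sign}(f(\mathbf{x}))}(\mathbf{x}).
\end{align*}
Theorem~\ref{thm: fisher_consistency} then identifies the pointwise minimizer (sign pattern $d^*$ and margin at least one in every coordinate) and gives $L_\psi^*(\mathbf{x}) = -v_{d^*(\mathbf{x})}(\mathbf{x})$.

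Combining the two sides, the pointwise inequality collapses to $\psi(|f(\mathbf{x})|)\,v_{\text{sign}(f(\mathbf{x}))}(\mathbf{x}) \geq 0$, which is immediate whenever $\text{sign}(f) = d^*$ (since centering gives $v_{d^*} \geq 0$) and, more generally, whenever the selected treatment has above-average conditional outcome. The main obstacle I anticipate is the complementary case where $\text{sign}(f) \neq d^*$ and $v_{\text{sign}(f)}(\mathbf{x}) < 0$: here the naive pointwise comparison is not tight enough, and one must refine the lower bound on $L_\psi(\mathbf{x},f) - L_\psi^*(\mathbf{x})$, likely by tracking how $\psi(|f|)$ interacts with the coordinate-wise margins $|f^{(k)}|$ or by benchmarking against the best $f$ restricted to the given sign pattern so that the deficit from $d^*$ is absorbed into the centering. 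Once the conditional inequality is secured, integrating over the marginal of $\mathbf{X}$ completes the proof.
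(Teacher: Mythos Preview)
Your route—conditioning on $\mathbf{X}$, using the centering $\sum_{\mathbf{a}}v_{\mathbf{a}}(\mathbf{x})=0$ from the treatment-free shift, and collapsing the comparison to the pointwise inequality $\psi(|f(\mathbf{x})|)\,v_{\mathrm{sign}(f(\mathbf{x}))}(\mathbf{x})\ge 0$—is exactly the approach in the paper. You have also correctly isolated the only nontrivial case, namely $v_{\mathrm{sign}(f)}(\mathbf{x})<0$ with $\psi(|f(\mathbf{x})|)>0$. The paper dispatches this case in a single line by asserting $\mathbb{E}[Y-g(\mathbf{X})\mid\mathbf{X}=\mathbf{x},\mathbf{A}=\mathbf{a}_*]>0$ for the treatment $\mathbf{a}_*=\mathrm{sign}(f(\mathbf{x}))$; but that is precisely the statement $v_{\mathrm{sign}(f)}(\mathbf{x})>0$, which need not hold for an arbitrary measurable $f$, so the paper's argument glosses over the very obstacle you flag.

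The refinements you propose cannot close the gap, because the pointwise inequality is genuinely false there. Take $K=1$, a point-mass distribution at $\mathbf{x}$ with $Q(\mathbf{x},+1)=1$, $Q(\mathbf{x},-1)=-1$ (so $g(\mathbf{x})=0$, $v_{+1}=1$, $v_{-1}=-1$, $d^*=+1$), and $f(\mathbf{x})=-\tfrac12$. Then the $0$--$1$ excess risk is $v_{+1}-v_{-1}=2$, while the $\psi$-excess risk is $v_{d^*}-(1-\psi(\tfrac12))v_{-1}=1+\tfrac12=\tfrac32$; hence $\mathcal{R}\{\mathrm{sign}(f)\}-\mathcal{R}^*=2>\tfrac32=\mathcal{R}_{\psi,g}(f)-\mathcal{R}^*_{\psi,g}$. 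Your ``benchmark against the best $f$ with the same sign pattern'' idea makes this transparent: that benchmark has $\psi(|f|)=0$, so $L_\psi-L_\psi^*$ decomposes as $(v_{d^*}-v_s)+v_s\psi(|f|)$, i.e.\ the $0$--$1$ excess plus a term that is strictly negative in the bad case. In short, your analysis is sound and sharper than the paper's; what it reveals is that the inequality as stated requires either an extra hypothesis on $f$ or a multiplicative constant on the right-hand side.
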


Theorem \ref{thm: excess_risk} shows that the excess risk of any measurable decision functions $f$ under the 0-1 loss is no larger than the excess risk under the $\psi$-risk. This suggests that if we estimate the ITR by minimizing $\mathcal{R}_{\psi, g}(f)$, the risk of the minimizer $\hat{f}$ is close to the Bayes risk. 

However, the above theoretical analyses are all based on the population-level probability of $(\mathbf{X}, \mathbf{A}, Y)$. In practice, we are concerned more about the estimator obtained from the empirical distribution. In the following, we establish the consistency of the proposed estimator $\hat{f}_n$ learned from the empirical distribution with sample size $n$.

\begin{theorem}
\label{thm: consistency}
Suppose the penalty coefficient $\lambda$ in the primal form (\ref{loss_decomp}) satisfies $\lambda\rightarrow 0$ and $n\lambda\rightarrow \infty$. The weights $\frac{|Y - g(\mathbf{X})|}{\mathbb{P}(\mathbf{A}|\mathbf{X})}$'s are assumed to be upper bounded by some positive constant $M$ almost surely. Then for any distribution $P$ for $(\mathbf{X}, \mathbf{A}, Y)$, we have
\begin{align}
    \mathbb{P}\bigg\{\lim_{n\rightarrow\infty}\mathcal{R}_{\psi, g}(\hat{f}_n) = \inf_{f\in\mathcal{H}_{\mathcal{K}} + \{1\}}\mathcal{R}_{\psi, g}(f)\bigg\} = 1,
\end{align}
where $\hat{f}_n$ is the minimizer of the empirical loss (\ref{empirical_loss}) with sample size $n$, and $\mathcal{H}_{\mathcal{K}} + \{1\}$ denotes the shifted reproducing kernel Hilbert space we considered in Section \ref{sec: nldr}.
\end{theorem}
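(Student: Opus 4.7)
The plan is to follow the standard approximation-plus-estimation decomposition, recast for our non-convex RKHS setting. Let $f_\lambda^\star \in \mathcal{H}_{\mathcal{K}} + \{1\}$ minimize the regularized population risk $\mathcal{R}_{\psi,g}(f) + \tfrac{\lambda}{2}\|f\|_{\mathcal{K}}^2$ and write $\mathcal{R}_{\psi,g}^\star := \inf_{f \in \mathcal{H}_{\mathcal{K}}+\{1\}} \mathcal{R}_{\psi,g}(f)$. Then
\begin{equation*}
\mathcal{R}_{\psi,g}(\hat{f}_n) - \mathcal{R}_{\psi,g}^\star = \bigl[\mathcal{R}_{\psi,g}(\hat{f}_n) - \mathcal{R}_{\psi,g}(f_\lambda^\star)\bigr] + \bigl[\mathcal{R}_{\psi,g}(f_\lambda^\star) - \mathcal{R}_{\psi,g}^\star\bigr],
\end{equation*}
and it suffices to drive each bracket to zero (the second deterministically, the first a.s.). The approximation bracket is handled directly: for any $\epsilon > 0$ pick $f_\epsilon \in \mathcal{H}_{\mathcal{K}} + \{1\}$ with $\mathcal{R}_{\psi,g}(f_\epsilon) \leq \mathcal{R}_{\psi,g}^\star + \epsilon$; by optimality of $f_\lambda^\star$,
\begin{equation*}
\mathcal{R}_{\psi,g}(f_\lambda^\star) \leq \mathcal{R}_{\psi,g}(f_\epsilon) + \tfrac{\lambda}{2}\|f_\epsilon\|_{\mathcal{K}}^2 \leq \mathcal{R}_{\psi,g}^\star + \epsilon + \tfrac{\lambda}{2}\|f_\epsilon\|_{\mathcal{K}}^2,
\end{equation*}
so sending $\lambda \to 0$ and then $\epsilon \to 0$ forces convergence.

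For the estimation bracket I would first obtain a deterministic norm bound. Comparing $\hat{f}_n$ against the zero function in the regularized empirical objective, and using $|w_i| \leq M$ together with $0 \leq \psi \leq 1$, yields $\tfrac{\lambda}{2}\|\hat{f}_n\|_{\mathcal{K}}^2 \leq M$, i.e.\ $\|\hat{f}_n\|_{\mathcal{K}} \leq R := \sqrt{2M/\lambda}$; the same bound holds for $f_\lambda^\star$. Restricted to the ball $B_R$, the per-sample loss $w \cdot \psi(z^{(1)},\dots,z^{(K)})$ is $M$-bounded and Lipschitz in $f$, because $\psi = T_1 - T_0$ is a difference of $1$-Lipschitz functions of $(z^{(1)},\dots,z^{(K)})$ while each $z^{(k)} = A^{(k)}f^{(k)}(\mathbf{X})$ depends linearly on $f^{(k)}$. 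A standard Rademacher complexity bound for a radius-$R$ ball in an RKHS with bounded kernel $\kappa$ gives empirical Rademacher complexity $O(R\kappa/\sqrt{n}) = O(1/\sqrt{n\lambda})$. Combining Talagrand's contraction principle with McDiarmid's concentration inequality yields
\begin{equation*}
\sup_{f \in B_R}\bigl|\mathcal{R}_{\psi,g}(f) - \widehat{\mathcal{R}}_{\psi,g}(f)\bigr| \leq \frac{C_1}{\sqrt{n\lambda}} + M\sqrt{\frac{2\log(1/\delta_n)}{n}}
\end{equation*}
with probability at least $1 - \delta_n$; choosing $\delta_n = n^{-2}$ makes the failure probabilities summable, and Borel--Cantelli together with $n\lambda \to \infty$ forces the supremum to $0$ almost surely. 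Applying the resulting uniform deviation to both $\hat{f}_n$ and $f_\lambda^\star$, and chaining with the empirical comparison $\widehat{\mathcal{R}}_{\psi,g}(\hat{f}_n) + \tfrac{\lambda}{2}\|\hat{f}_n\|_{\mathcal{K}}^2 \leq \widehat{\mathcal{R}}_{\psi,g}(f_\lambda^\star) + \tfrac{\lambda}{2}\|f_\lambda^\star\|_{\mathcal{K}}^2$, sandwiches the estimation bracket and completes the proof.

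The main obstacle is that non-convexity of $\psi$ and the \emph{signed} weights $w_i = (Y_i - g(\mathbf{X}_i))/\mathbb{P}(\mathbf{A}_i|\mathbf{X}_i)$ preclude the convex-analytic shortcuts (strong convexity, consistency arguments for convex M-estimators) that usually simplify this kind of result, so the argument must proceed through uniform empirical-process control. What rescues the analysis is that $\psi$, although non-convex, is globally bounded and Lipschitz on every RKHS ball, so classical Rademacher/McDiarmid machinery still applies; the rate at which the norm budget $R = O(\lambda^{-1/2})$ is allowed to grow is exactly balanced by the assumption $n\lambda \to \infty$, which is what makes the estimation bracket vanish.
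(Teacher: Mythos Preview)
Your approach is correct in outline and closely parallels the paper's argument: both derive the norm control $\lambda\|\hat f_n\|_{\mathcal K}^2 = O(1)$ by comparing the regularized empirical objective at $\hat f_n$ to its value at $0$, both exploit that $\psi$ is bounded in $[0,1]$ and $1$-Lipschitz, and both conclude via a uniform empirical-process bound whose rate $O((n\lambda)^{-1/2})$ is precisely killed by the hypothesis $n\lambda\to\infty$. The organizational differences are (i) you route the argument through an intermediate regularized population minimizer $f_\lambda^\star$ and an explicit approximation--estimation split, whereas the paper compares $\mathbb P_n L(\hat f_n)$ directly to $\mathbb P L(h,b)$ for arbitrary fixed $(h,b)$ and then sandwiches; and (ii) you invoke Rademacher complexity plus McDiarmid plus Borel--Cantelli, whereas the paper simply observes that the rescaled loss class $\{\sqrt\lambda\,L(h,b):\sqrt\lambda\|h\|_{\mathcal K},\sqrt\lambda|b|\text{ bounded}\}$ is $P$-Donsker. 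Your route is slightly more self-contained and makes the almost-sure conclusion explicit; the paper's is terser but relies on heavier off-the-shelf machinery.

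There is one genuine omission. The function class here is $\mathcal H_{\mathcal K}+\{1\}$, i.e.\ each $f^{(k)}=h^{(k)}+b^{(k)}$ with the penalty applied only to $\|h\|_{\mathcal K}$; the intercepts $b^{(k)}$ are \emph{unpenalized}. Your comparison-to-zero argument bounds $\|h_n\|_{\mathcal K}$ but says nothing about $|b_n|$, and without an intercept bound the ``ball $B_R$'' you restrict to is not a uniformly controllable class (its Rademacher complexity is infinite). The paper handles this separately: it argues that some sample point must satisfy $|h_n(\mathbf x_i)+b_n|<1$, whence $|b_n|\le 1+\|h_n\|_\infty\le 1+C_{\mathcal K}\|h_n\|_{\mathcal K}$, so $\sqrt\lambda\,|b_n|$ is also bounded. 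You should insert an analogous step before invoking your uniform deviation bound.
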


Theorem \ref{thm: consistency} claims that the risk of the proposed estimator obtained by minimizing the empirical risk (\ref{empirical_loss}) can converge in probability to the minimal of the population risk as sample size increases. In other words, the proposed estimator is consistent corresponding to the optimal decision rules for the combination treatments. In addition, Theorem \ref{thm: consistency} also holds for linear decision rules with a pre-specified linear kernel $\mathcal{K}(\cdot, \cdot)$. The technical details of the proof are provided in Appendix \ref{A.7}.
Furthermore, we provide the extension of Theorem \ref{thm: consistency} to observational study in Appendix \ref{A.9}.

\section{Numerical Studies}
\label{sec: sim}
In this section, we assess the performance of the proposed method through simulation studies which mimic real-world scenarios. In these simulations, we consider the treatment effects with varying complexities, while interaction effects of different intensities are included in all settings.

The simulation settings are conducted under different sample sizes (n = 400, 800, 2000). The pre-treatment covariates $\mathbf{X}\in\mathbb{R}^{10}$ are sampled uniformly from the interval $(-1, 1)$. In all designed simulation studies, combination treatments $\mathbf{A}$ are uniformly randomly assigned. We consider three data generation processes. In the first two settings, two treatments are considered ($K=2$), resulting in four possible combination treatments with different treatment effects. In the third setting, three treatments are considered ($K=3$), corresponding to a total of eight possible combinations, with more complex treatment effects involving non-linear combinations of the covariates. The detailed treatment effects specifications are described as follows:

\begin{itemize}
    \item[] Simulation setting 1:
    \begin{itemize}
        \item[] $\tau_{(-1, -1)}(\mathbf{X}) = 0;$ $\tau_{(-1, 1)}(\mathbf{X}) = 6 \cdot \mathbb{I}(X_1 + X_2 > 0) \cdot \mathbb{I}(-X_1 + X_2 < 0);$
        \item[] $\tau_{(1, -1)}(\mathbf{X}) = 5 \cdot \mathbb{I}(X_1 + X_2 < 0) \cdot \mathbb{I}(-X_1 + X_2 < 0);$
        \item[] $\tau_{(1, 1)}(\mathbf{X}) = 3 \cdot \mathbb{I}(X_1 + X_2 > 0) \cdot \mathbb{I}(-X_1 + X_2 > 0).$
    \end{itemize}
    \item[] Simulation setting 2:
    \begin{itemize}
        \item[] $\tau_{(-1, -1)}(\mathbf{X}) = (X_1 + X_2)^2;$ $\tau_{(-1, 1)}(\mathbf{X}) = X_2^2 + X_3X_4;$
        \item[] $\tau_{(1, -1)}(\mathbf{X}) = -X_3X_4;$ $\tau_{(1, 1)}(\mathbf{X}) = X_2^2 + 3X_5X_6.$
    \end{itemize}
    \item[] Simulation setting 3:
    \begin{itemize}
        \item[] $\tau_{(-1, -1, -1)}(\mathbf{X}) = 0;$ $\tau_{(-1, -1, 1)}(\mathbf{X}) = 2(X_1 + \exp(X_2));$ 
        \item[] $\tau_{(-1, 1, -1)}(\mathbf{X}) = X_3 + (X_4 + X_5)^2;$ 
        \item[] $\tau_{(-1, 1, 1)}(\mathbf{X}) = 2(X_1 + \exp(X_2)) + X_3 + (X_4 + X_5)^2 + \log((X_5 + 1)^2);$
        \item[] $\tau_{(1, -1, -1)}(\mathbf{X}) = \exp(X_6 + X_7);$ 
        \item[] $\tau_{(1, -1, 1)}(\mathbf{X}) = \exp(X_6 + X_7) + 2(X_1 + \exp(X_2)) + X_8 + X_9 + X_{10};$
        \item[] $\tau_{(1, 1, -1)}(\mathbf{X}) = \exp(X_6 + X_7) + X_3 + (X_4 + X_5)^2;$
        \item[] $\tau_{(1, 1, 1)}(\mathbf{X}) = \exp(X_6 + X_7) + X_3 + (X_4 + X_5)^2 + 2(X_1 + \exp(X_2)) + (X_1 - X_5 + X_6)^2.$
    \end{itemize}
\end{itemize}

After generating the treatment effects, we design the outcome of interest $Y$ as follows: 
\begin{align}
    Y = g(\mathbf{X}) + \tau_{\mathbf{A}}(\mathbf{X}) + \epsilon, \quad
    g(\mathbf{X}) = 1 + X_1 + 2X_2, \quad \epsilon\sim N(0, 0.3), \notag
\end{align}
where $g(\mathbf{X})$ is the treatment-free effects, and $\epsilon$ is the random noise. It is noteworthy that interaction effects among combination treatments are designed in all of the above simulation settings. Specifically, in simulation setting 1, we split a two-dimensional plane into four quadrants. Except for the quadrant $\mathbb{I}(X_1 + X_2 < 0, -X_1+X_2 > 0)$, the combination of two treatments induces either positive or negative interaction effects in the other three quadrants as shown in Figure \ref{fig: sim1_interaction_effects}. In simulation settings 2 and 3, the interaction effects are polynomials and nonlinear functions of the pre-treatment covariates $X$, respectively. Additionally, the true decision rules are linear in simulation setting 1, and nonlinear in simulation settings 2 and 3. 

\begin{figure}
    \centering
    \includegraphics[width=300pt]{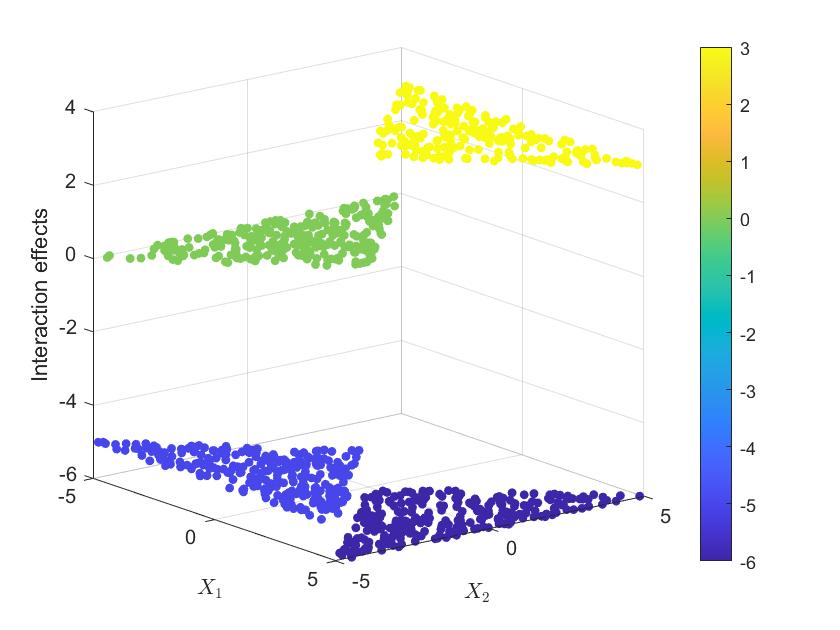}
    \caption{Interaction effects induced by the combination of two single treatments in different quadrants.}
    \label{fig: sim1_interaction_effects}
\end{figure}

\begin{table}[!ht]
    \centering
    \scriptsize
    \begin{tabular}{c|c|ccc}
    \hline
    Setting & Method & 400 & 800 & 2000\\
    \hline
    \multirow{7}{*}{1}&\textbf{MLRWL-Linear} & \textbf{4.104(0.092)} & \textbf{4.179(0.076)} & \textbf{4.238(0.077)} \\
    &\textbf{MLRWL-Kernel} & 4.022(0.094) & 4.156(0.073) & 4.226(0.074)\\
    &OWL-DL & 3.579(0.171) & 3.871(0.114) & 4.073(0.067) \\
    &$L_1$-PLS & 3.901(0.098) & 4.052(0.081) & 4.143(0.068) \\
    &OWL-MD & 3.772(0.119) & 3.944(0.108) & 4.035(0.110) \\
    &MOWL-Linear & 3.404(0.236) & 3.571(0.189) & 3.678(0.167) \\
    &MOWL-Kernel & 2.835(0.273) & 3.026(0.251) & 3.329(0.208) \\
    \hline
    \multirow{7}{*}{2}&\textbf{MLRWL-Linear} & 1.369(0.034) & 1.372(0.032) & 1.375(0.022) \\
    &\textbf{MLRWL-Kernel} & \textbf{1.700(0.047)} & \textbf{1.810(0.045)} & \textbf{1.923(0.046)} \\
    &OWL-DL & 1.451(0.060) & 1.472(0.051) & 1.499(0.042) \\
    &$L_1$-PLS & 1.371(0.062) & 1.364(0.051) & 1.377(0.044) \\
    &OWL-MD & 1.556(0.058) & 1.591(0.056) & 1.613(0.050) \\
    &MOWL-Linear & 1.553(0.074) & 1.578(0.054) & 1.589(0.037) \\
    &MOWL-Kernel & 1.641(0.063) & 1.668(0.041) & 1.681(0.028) \\
    \hline
    \multirow{7}{*}{3}&\textbf{MLRWL-Linear} & 5.664(0.426) & 6.000(0.418) & 6.267(0.247) \\
    &\textbf{MLRWL-Kernel} & \textbf{6.328(0.349)} & \textbf{6.415(0.100)} & \textbf{6.416(0.097)} \\
    &OWL-DL & 5.524(0.239) & 5.861(0.198) & 6.148(0.152) \\
    &$L_1$-PLS & 4.941(0.264) & 5.047(0.205) & 5.090(0.040) \\
    &OWL-MD &  5.924(0.189) & 6.125(0.152) & 6.295(0.110) \\
    &MOWL-Linear &  6.105(0.340) & 6.286(0.196) & 6.374(0.116) \\
    &MOWL-Kernel &  6.269(0.432) & 6.375(0.200) & 6.399(0.101) \\
    \hline
    \end{tabular}
    \caption{Simulation studies: mean and standard error of the value function under the proposed method with linear and nonlinear decision rules, and five competing methods: the outcome weighted learning with deep learning (OWL-DL, \citep{liang2018estimating}), the $L_1$ penalized least square ($L_1$-PLS, \citep{qian2011performance}), the outcome weighted learning with multinomial deviance (OWL-MD, \citep{huang2019multicategory}), and the multicategory outcome weighted learning with linear and kernel functions (MOWL-Linear and MOWL-Kernel, \citep{zhang2020multicategory}).}
    \label{tab: sim_results_value}
\end{table}

\begin{table}[!ht]
    \centering
    \scriptsize
    \begin{tabular}{c|c|ccc}
    \hline
    Setting & Method & 400 & 800 & 2000 \\
    \hline
    \multirow{7}{*}{1}&\textbf{MLRWL-Linear} & \textbf{0.797(0.033)} & \textbf{0.853(0.022)} & \textbf{0.884(0.012)} \\
    &\textbf{MLRWL-Kernel} & 0.664(0.042) & 0.744(0.023) & 0.797(0.015)  \\
    &OWL-DL & 0.534(0.037) & 0.581(0.029) & 0.625(0.022) \\
    &$L_1$-PLS & 0.669(0.019) & 0.699(0.015) & 0.717(0.012) \\
    &OWL-MD & 0.640(0.027) & 0.674(0.023) & 0.690(0.025)\\
    &MOWL-Linear & 0.552(0.051) & 0.591(0.039) & 0.611(0.034) \\
    &MOWL-Kernel & 0.421(0.045) & 0.456(0.042) & 0.509(0.036)\\
    \hline
    \multirow{7}{*}{2}&\textbf{MLRWL-Linear} & 0.262(0.015) & 0.267(0.011) & 0.272(0.009) \\
    &\textbf{MLRWL-Kernel} & \textbf{0.452(0.027)} & \textbf{0.539(0.025)} & \textbf{0.638(0.013)} \\
    &OWL-DL & 0.302(0.030) & 0.313(0.026) & 0.331(0.021)\\
    &$L_1$-PLS & 0.278(0.026) & 0.276(0.025) & 0.279(0.023) \\
    &OWL-MD & 0.353(0.027) & 0.365(0.020) & 0.377(0.017) \\
    &MOWL-Linear & 0.342(0.031) & 0.350(0.054) & 0.362(0.027) \\
    &MOWL-Kernel & 0.385(0.025) & 0.398(0.023) & 0.402(0.019) \\
    \hline
    \multirow{7}{*}{3}&\textbf{MLRWL-Linear} & 0.584(0.082) & 0.647(0.086) & 0.708(0.053) \\
    &\textbf{MLRWL-Kernel} & \textbf{0.717(0.114)} & \textbf{0.745(0.003)} & \textbf{0.746(0.003)} \\
    &OWL-DL & 0.455(0.064) & 0.543(0.062) & 0.635(0.052)\\
    &$L_1$-PLS & 0.272(0.055) & 0.257(0.045) & 0.237(0.040)\\
    &OWL-MD & 0.584(0.063) & 0.648(0.052) & 0.718(0.025)\\
    &MOWL-Linear & 0.649(0.114) & 0.709(0.056) & 0.739(0.017)\\
    &MOWL-Kernel & 0.705(0.142) & 0.737(0.063) & 0.746(0.005)\\
    \hline
    \end{tabular}
    \caption{Simulation studies: mean and standard error of the accuracy under the proposed method with linear and nonlinear decision rules, and five competing methods: the outcome weighted learning with deep learning (OWL-DL, \citep{liang2018estimating}), the $L_1$ penalized least square ($L_1$-PLS, \citep{qian2011performance}), the outcome weighted learning with multinomial deviance (OWL-MD, \citep{huang2019multicategory}), and the multicategory outcome weighted learning with linear and kernel functions (MOWL-Linear and MOWL-Kernel, \citep{zhang2020multicategory}).}
    \label{tab: sim_results_accuracy}
\end{table}

We compare our method with several existing methods which estimate the optimal ITR for combination treatments or multicategory treatments: the outcome weighted learning with deep learning (OWL-DL, \citep{liang2018estimating}), the $L_1$ penalized least square ($L_1$-PLS, \citep{qian2011performance}), the outcome weighted learning with multinomial deviance (OWL-MD, \citep{huang2019multicategory}), and the multicategory outcome weighted learning with linear and kernel functions (MOWL-Linear and MOWL-Kernel, \citep{zhang2020multicategory}). For the last four competing methods, we first convert the combination treatments into categorical treatments and apply those methods to estimate the ITR. 

All of the above simulation experiments are repeated 100 times, and the empirical performance is evaluated by the prediction accuracy, which is defined as $\frac{1}{n}\sum_{i=1}^{n}\mathbb{I}(\mathbf{A}_i^{\text{opt}} = \hat{d}(\mathbf{X}_i))$, where $\mathbf{A}_i^{\text{opt}}$ is the optimal treatment assignment for the $i$th subject derived from the data generation process. We also compute the empirical value function \citep{qian2011performance}, measured by an estimator $\mathcal{V}(\hat{d}) = \frac{\sum_{i=1}^{n}Y_i\mathbb{I}(\mathbf{A}_i = \hat{d}(\mathbf{X}_i))}{\sum_{i=1}^{n}\mathbb{I}(\mathbf{A}_i = \hat{d}(\mathbf{X}_i))}$ to assess the performance even when the optimal treatment assignments are unknown. 

The results of the simulation studies are presented in Tables \ref{tab: sim_results_value} and \ref{tab: sim_results_accuracy}, which demonstrates the effectiveness of our proposed method in estimating the optimal ITR for combination treatments. Our approach consistently outperformed competing methods in terms of optimal treatment assignment accuracy and value function across various settings, particularly considering the interaction effects of combination treatments. In simulation setting 1, the proposed method with linear decision rules improves the optimal treatment assignment accuracy by 13.2\% to 34.0\% compared with competing methods given 2000 samples. Even though the true decision rules are linear, the proposed method with nonlinear decision rules still achieves comparable accuracy and value function. In simulation settings 2 and 3, given that the true decision rules are nonlinear, the proposed method with nonlinear decision rules achieves the best performance. In particular, it improves the optimal treatment assignment accuracy by 27.1\% to 39.4\% in setting 2, and 4.1\% to 53.0\% in setting 3, respectively.

\section{Real Data Application}

In this section, we apply our method to recommend the optimal combination treatments for type-2 diabetes patients. The dataset is from Electronic Health Record (EHR) data accessible from the Clinical Practice Research Datalink \footnote{\url{https://cprd.com}}. In this study, type-2 diabetes patients were recruited from 2015 to 2018. Each subject was followed for 6 months, and the effectiveness of their assigned treatments was measured. There are four candidates for single treatments: dipeptidyl peptidase-4 (DPP4), sulfonylurea (SU), metformin (Met), and thiazolidinedione (TZD), which induces 16 combination treatments in total. In the past decade, researchers have investigated the combination treatments for type-2 diabetes patients \citep{salvo2016addition, ahren2008novel, mikhail2008combination}, and interaction effects among these treatments are evaluated. For example, \cite{salvo2016addition} suggest that SU combined with DDP4 induces a higher risk of hypoglycemia compared with using the SU treatment alone. Therefore, it is essential to consider interaction effects in assessing the optimal ITR for type-2 diabetes patients. 

In this dataset, 21 pre-treatment covariates were collected, including subjects' demographic information (e.g., age, BMI, gender, weight, height), diabetes-related health index (e.g., high-density lipoprotein, low-density lipoprotein, hematocrit), and medical history (e.g., congestive heart failure, stroke, hypertension). We use all these covariates except for the lower extremity arteries (LEA) to control for potential confounding, as the LEA value is the same for all subjects. The primary index to measure the effectiveness of treatment is the A1C, which measures average blood glucose levels \citep{kahn2008translating}. The normal A1C level is below 5.7\%, and type-2 diabetes patients are generally above 6.5\% \citep{zhang2010a1c}. The A1C levels are expected to decrease after the treatments are applied. Therefore, we use the negative change of A1C as our outcome, where a larger value indicates a better treatment effect.

In the implementation, we split the dataset into training (800), validation (200), and testing (139) sets. Since some combination treatments were assigned to fewer than 10 subjects, we perform stratified sampling to ensure that the training set includes all possible combination treatments. To validate the results, we repeat the sampling procedure and run the experiment independently 100 times, and report the averaged value function on test sets. Similar to the simulation studies, we compare the proposed method with the five competing methods which are used as competing methods in the simulation studies. 

Table \ref{tab: real_data_value} provides the means and standard errors of the value function. Our data analysis indicates that the proposed method under linear and nonlinear decision rules outperforms the competing methods with higher value functions and smaller standard deviations. Specifically, compared with the methods for the multicategory treatment ITR (all competing methods except for OWL-DL), the proposed method improves the value function by 48.6\%, 12.6\%, 16.9\%, and 3.8\%, while reducing the standard errors by 81.2\%, 51.8\%, 57.3\%, and 72.0\%, respectively. The improvement is partially due to the proposed MLRWL framework which requires estimating fewer decision rules than those multicategory ITR estimation methods. Therefore, the reduced standard error is also observed for OWL-DL \citep{liang2018estimating}. Compared with OWL-DL \citep{liang2018estimating}, our proposed method improves the value function by 4.4\% with a 29.7\% decreased standard errors. This suggests that incorporating interaction effects in estimating the optimal ITR for combination treatments is essential and useful. 

\begin{table}
    \centering
    \begin{tabular}{|c|c|}
    \hline
    Method & Value \\
    \hline
    \textbf{MLRWL-Linear} & 2.615(0.403) \\
    \textbf{MLRWL-Kernel} & \textbf{2.645(0.339)} \\
    OWL-DL & 2.534(0.482) \\
    $L_1$-PLS & 1.780(1.801) \\
    OWL-MD & 2.349(0.704) \\
    MOWL-Linear & 2.263(0.793)\\
    MOWL-Kernel & 2.548(1.210) \\
    \hline
    \end{tabular}
    \caption{Real data application: mean and standard error of the value function using the proposed method with linear and nonlinear decision rules, and five competing methods: the outcome weighted learning with deep learning (OWL-DL, \citealp{liang2018estimating}), the $L_1$ penalized least square ($L_1$-PLS, \citealp{qian2011performance}), the outcome weighted learning with multinomial deviance (OWL-MD, \citealp{huang2019multicategory}), the multicategory outcome weighted learning with linear and kernel functions (MOWL-Linear and MOWL-Kernel, \citealp{zhang2020multicategory}).}
    \label{tab: real_data_value}
\end{table}

\section{Conclusion and Discussion}

In this paper, we investigate the efficient estimation of individualized treatment rule for combination treatments. Our main contributions are as follows: First, we formulate the value maximization problem as a multi-label classification problem, which greatly reduces the modeling complexity for decision rules. Second, we proposed a non-convex multi-label surrogate loss which can incorporate any interaction effects among combination treatments. The proposed method has sound theoretical properties including Fisher consistency and universal consistency. Third, we solve the non-convex minimization efficiently with the difference-of-convex algorithm, and achieve great numerical performance in simulation studies and a real data example.

In the combination treatment problems, the positivity assumption is a contingent assumption, especially in observational study scenarios. We can explore further potential directions as follows. First, we could utilize parametric assumptions on the interaction effects among combination treatments. Suppose the high-order interaction effects do not exist, then it is possible to identify the treatment effects of combination treatment by single treatments and lower-order combination treatments. We refer readers to \citep{yu2023balancing} for a more comprehensive investigation in this direction. The second plausible solution is to identify the value functions with incremental propensity scores \citep{kennedy2019nonparametric, zhao2023positivity}, which shift the propensity values as a treatment assignment probability instead of assigning a deterministic treatment. This stochastic approach inherently avoids the positivity assumption; however, existing methods only apply to binary treatment problems. Therefore, it is worth further investigation on applying incremental propensity scores to multiple or combination treatment problems. Another potential solution is based on the pessimistic principal \citep{jin2022policy} which optimizes lower confidence bounds, instead of maximizing the point estimation of policy values. This approach can also relax the positivity assumption, but has not been studied in the combination treatment literature.


\begin{appendix}

\section{}

In this appendix, we provide the detailed derivation of the optimization problem for linear and nonlinear decision rules, and technique proof details of the theoretical properties of the Multi-Label Residual Weighted Learning (MLRWL). In addition, the extension of our method to observational study is also discussed.

\subsection{Derivation of the optimization problem of linear decision rules}
\label{A.1}

Within each iteration, the subproblem can be formulated as the following quadratic programming:

\begin{align}
\label{linear_primal_problem}
\scriptsize
\begin{split}
    \min_{\boldsymbol\beta} & \quad \frac{1}{2}\sum_{k=1}^{K}\lVert \boldsymbol\beta_{1k} \rVert^2 + \gamma\sum_{i=1}^{n}|w_i|\eta_i + \gamma\sum_{k=1}^{K}<\nabla_{\boldsymbol\beta_{1k}}\mathcal{L}_{\text{cave}}(\hat{\boldsymbol\beta}^{(t-1)}), \boldsymbol\beta_{1k}> + \\ &\gamma\sum_{k=1}^{K}\nabla_{\beta_{0k}}\mathcal{L}_{\text{cave}}(\hat{\boldsymbol\beta}^{(t-1)})\beta_{0k} \\
    s.t. & \quad \eta_i \ge I(w_i\ge 0) - a_{i}^{(k)}(\beta_{0k} + \mathbf{x}_i^T\boldsymbol\beta_{1k}), \text{ for any } k=1,2,...,K  \\
    & \quad \eta_i \ge 0 
\end{split}
\end{align}

where $\gamma$ is associated with the penalty coefficient $\lambda$. By introducing the Lagrange multipliers $\theta_{ik}$'s and $\mu_{i}$'s, we have the following Lagrange function: 

\begin{align}
\scriptsize
\begin{split}
    \mathcal{L}(\boldsymbol\beta, \boldsymbol\lambda, \mu) &= \frac{1}{2}\sum_{k=1}^{K}\lVert \boldsymbol\beta_{1k} \rVert^2 + \gamma \sum_{i=1}^{n}|w_i|\eta_i + \gamma\sum_{k=1}^{K}<\nabla_{\boldsymbol\beta_{1k}} L_{\text{cave}}(\hat{\boldsymbol\beta}^{(t-1)}), \boldsymbol\beta_{1k}> + \\
    &\gamma\sum_{k=1}^{K}\nabla_{\beta_{0k}}\mathcal{L}_{\text{cave}}(\hat{\boldsymbol\beta}^{(t-1)})\beta_{0k} \notag
    + \sum_{i=1}^{n}\sum_{k=1}^{K}\theta_{ik}(I(w_i\ge 0) - a_{i}^{(k)}(\beta_{0k} + \mathbf{x}_i^T\boldsymbol\beta_{1k}) - \eta_i)\\
    &- \sum_{i=1}^{n}\mu_i\eta_i, \notag 
\end{split}
\end{align}
where $\theta_{ik} \ge 0 \quad \forall i=1,...,n, k=1,...K$ and $\mu_{i} \ge 0, \forall i=1,...,n$. After taking derivatives of $\mathcal{L}(\boldsymbol\beta, \boldsymbol\theta, \mu)$ with respect to $\beta_{0k}$'s, $\boldsymbol\beta_{1k}$'s, and $\eta_i$'s and letting them equal to zero,  we have
\begin{equation}
\label{beta1_equation}
    \frac{\partial \mathcal{L}}{\partial \boldsymbol\beta_{1k}} = \boldsymbol\beta_{1k} + \gamma\nabla_{\boldsymbol\beta_{1k}}\mathcal{L}_{\text{cave}}(\hat{\boldsymbol\beta}^{(t-1)}) - \sum_{i=1}^{n}\theta_{ik}a_{i}^{(k)}\mathbf{x}_i = 0
\end{equation}
\begin{equation}
\label{beta0_equation}
    \frac{\partial \mathcal{L}}{\partial \beta_{0k}} = \gamma\nabla_{\beta_{0k}}\mathcal{L}_{\text{cave}}(\hat{\boldsymbol\beta}^{(t-1)}) - \sum_{i=1}^{n}\theta_{ik}a_{i}^{(k)} = 0
\end{equation}
\begin{equation}
\label{lambda_inequality}
    \frac{\partial \mathcal{L}}{\partial \eta_i} = \gamma |w_i| - \sum_{k=1}^{K}\theta_{ik} - \mu_i = 0.
\end{equation}
Then the primal problem \ref{linear_primal_problem} can be transformed to the dual problem:
\begin{align}
\label{linear_dual_problem}
\tiny
\begin{split}
    \min_{\theta} &\quad \frac{1}{2}\sum_{k=1}^{K}\sum_{i=1}^{n}\sum_{j=1}^{n}\theta_{ik}\theta_{jk}a_{i}^{(k)}a_{j}^{(k)}\mathbf{x}_i^T\mathbf{x}_j - \gamma\sum_{k=1}^{K}\sum_{i=1}^{n}\theta_{ik}a_{i}^{(k)}\mathbf{x}_i^T\nabla_{\boldsymbol\beta_{1k}}\mathcal{L}_{\text{cave}}(\hat{\boldsymbol\beta}^{(t-1)}) -\sum_{k=1}^{K}\sum_{i=1}^{n}\theta_{ik}I(w_i\ge 0) \\
    s.t. &\quad \sum_{k=1}^{K}\theta_{ik} \stackrel{(\ref{lambda_inequality})}{\le} \gamma |w_i|, \quad \gamma\nabla_{\beta_{0k}}\mathcal{L}_{\text{cave}}(\hat{\boldsymbol\beta}^{(t-1)}) \stackrel{(\ref{beta0_equation})}{=} \sum_{i=1}^{n}\theta_{ik}a_{i}^{(k)}, \quad \theta_{ik}\ge 0.
\end{split}
\end{align}
$\theta_{ik}$'s can be solved via the standard quadratic programming algorithm, and $\boldsymbol\beta_{1k}$ can be obtained from (\ref{beta1_equation}). By the Karush-Kuhn-Tucker conditions \citep{bertsekas1997nonlinear}, we have
\begin{align}
\begin{split}
    &\theta_{ik}(I(w_i \ge 0) - a_{i}^{(k)}(\beta_{0k} + \mathbf{x}_i^T\boldsymbol{\beta}_{1k}) - \eta_i) = 0 \notag \\
    &\mu_i\eta_i = 0. \notag
\end{split}
\end{align}
Then $\beta_{0k} = I(w_i \ge 0)a_i^{(k)} - \mathbf{x}_i^T\boldsymbol{\beta}_{1k}$ for points satisfying $\theta_{ik} > 0$ and $\eta_i = 0$. For numerical stability, we take the mean value of such $\beta_{0k}$'s as the estimation \citep{hastie2009elements}.

\subsection{Derivation of the optimization problem of nonlinear decision rules}
\label{A.2}

Similar to the linear case, we can still decompose the loss function into convex and concave parts, but replace the linear decision rule with a nonlinear decision rule, represented as $f_{k}(\mathbf{x}) = \beta_{0k} + \sum_{i=1}^{n}\mathcal{K}(\mathbf{x}_i, \mathbf{x})\beta_{ik}$ where $\mathcal{K}(\cdot, \cdot)$ is the pre-specified kernel function. Within the $t$ th iteration, we solve the following quadratic programming:
\begin{align}
\label{nonlinear_primal_problem}
\scriptsize
\begin{split}
    \min_{\beta} &\quad \frac{1}{2}\sum_{k=1}^{K} \boldsymbol{\beta}_{k}^T\mathbf{K}\boldsymbol{\beta_{k}} + \gamma\sum_{i=1}^{n}|w_i|\eta_i + \gamma\sum_{k=1}^{K}<\nabla_{\boldsymbol{\beta}_k}\mathcal{L}_{\text{cave}}(\hat{\boldsymbol\beta}^{(t-1)}), \boldsymbol{\beta}_k> + \\
    &\gamma\sum_{k=1}^{K}<\nabla_{\beta_{0k}}\mathcal{L}_{\text{cave}}(\hat{\boldsymbol\beta}^{(t-1)}), \beta_{0k}> \\
    s.t. & \quad \eta_i \ge I(w_i\ge 0) - a_{i}^{(k)}(\mathbf{K}_i\boldsymbol{\beta}_k + \beta_{0k}), \forall k=1,2,...,K  \\
    & \quad \eta_i \ge 0
\end{split}
\end{align}
where $\mathbf{K} = (K_{ij})_{n\times n}$ and $K_{ij} = \mathcal{K}(\mathbf{x}_i, \mathbf{x}_j)$, and $\mathbf{K}_i$ is the $i$ th row of $\mathbf{K}$. Following the similar procedure as in (\ref{beta1_equation}, \ref{beta0_equation}, \ref{lambda_inequality}), we can obtain the following subproblem in the $t$ th iteration:
\begin{align}
\label{nonlinear_dual_problem}
\scriptsize
\begin{split}
    \min_{\theta} & \quad\frac{1}{2}\sum_{k=1}^{K}\sum_{i=1}^{n}\sum_{j=1}^{n}\theta_{ik}\theta_{jk}a_{i}^{(k)}a_{j}^{(k)}\mathbf{K}_{ij} - \gamma\sum_{k=1}^{K}\sum_{i=1}^{n}\theta_{ik}a_{i}^{(k)}\nabla_{\beta_{ik}} \mathcal{L}_{\text{cave}}(\hat{\beta}^{(t-1)}) - \sum_{k=1}^{K}\sum_{i=1}^{n}\theta_{ik}I(w_i \ge 0) \\
    s.t. &\quad \sum_{k=1}^{K}\theta_{ik} \stackrel{(\ref{lambda_inequality})}{\le} \gamma |w_i|, \quad \gamma\nabla_{\beta_{0k}}\mathcal{L}_{\text{cave}}(\hat{\boldsymbol{\beta}}^{(t-1)}) = \sum_{i=1}^{n}\theta_{ik}a_i^{(k)}, \quad \theta_{ik}\ge 0.
\end{split}
\end{align}
Therefore, we can also apply the standard quadratic programming algorithm to solve (\ref{nonlinear_dual_problem}) and obtain the solution of $\theta_{ik}$'s.

\subsection{Algorithm Convergence}
\label{A.3}

In this section, we show that the convergent points of the Algorithm \ref{dc_alg} is stationary points.

\begin{proposition}
\label{prop: alg_convergence}
If the level set $\{\boldsymbol\beta|\mathcal{L}(\boldsymbol\beta) \le \mathcal{L}(\boldsymbol\beta^{(0)})\}$ is compact, then the convergent points obtained from Algorithm \ref{dc_alg} are stationary points of $\mathcal{L}(\boldsymbol\beta)$.
\end{proposition}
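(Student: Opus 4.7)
The plan is to follow the classical convergence theory of the DC (difference of convex) algorithm due to Tao and An. The argument has three stages: establish a monotone descent of the objective, use compactness of the level set to extract a convergent subsequence, and pass to the limit in the first-order optimality condition of the subproblem to conclude that the limit point $\boldsymbol\beta^{*}$ satisfies $0 \in \partial \mathcal{L}(\boldsymbol\beta^{*})$, which by definition makes it a stationary point of $\mathcal{L}$.

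For the monotone descent, the two key ingredients are concavity of $\mathcal{L}_{\text{cave}}$ and optimality of $\boldsymbol\beta^{(t)}$ in the linearized subproblem solved by Algorithm~\ref{dc_alg}. Concavity gives
$$\mathcal{L}_{\text{cave}}(\boldsymbol\beta^{(t)}) \le \mathcal{L}_{\text{cave}}(\boldsymbol\beta^{(t-1)}) + \bigl\langle \nabla \mathcal{L}_{\text{cave}}(\boldsymbol\beta^{(t-1)}),\, \boldsymbol\beta^{(t)} - \boldsymbol\beta^{(t-1)}\bigr\rangle,$$
while the definition of $\boldsymbol\beta^{(t)}$ as the minimizer of the majorizing convex surrogate gives
$$\mathcal{L}_{\text{cvx}}(\boldsymbol\beta^{(t)}) + \bigl\langle \nabla \mathcal{L}_{\text{cave}}(\boldsymbol\beta^{(t-1)}),\, \boldsymbol\beta^{(t)} - \boldsymbol\beta^{(t-1)}\bigr\rangle \le \mathcal{L}_{\text{cvx}}(\boldsymbol\beta^{(t-1)}).$$
Summing the two yields $\mathcal{L}(\boldsymbol\beta^{(t)}) \le \mathcal{L}(\boldsymbol\beta^{(t-1)})$, so every iterate remains inside the compact level set, and $\{\mathcal{L}(\boldsymbol\beta^{(t)})\}$ is monotone and bounded below (the $\psi$-loss and the penalty are non-negative), hence convergent. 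Compactness then provides a convergent subsequence $\boldsymbol\beta^{(t_j)} \to \boldsymbol\beta^{*}$. Strong convexity of $\mathcal{L}_{\text{cvx}}$, inherited from the quadratic penalty $\tfrac{1}{2}\sum_k \mathcal{P}(\boldsymbol\beta_k)$ (the squared Euclidean norm in the linear case and the kernel-induced norm in the nonlinear case), sharpens the above descent to a quadratic bound $\mathcal{L}(\boldsymbol\beta^{(t-1)}) - \mathcal{L}(\boldsymbol\beta^{(t)}) \ge \tfrac{c}{2}\lVert \boldsymbol\beta^{(t)}-\boldsymbol\beta^{(t-1)}\rVert^{2}$ on the penalized coefficients; combined with telescoping this forces $\lVert \boldsymbol\beta^{(t)}-\boldsymbol\beta^{(t-1)}\rVert \to 0$, so $\boldsymbol\beta^{(t_j-1)} \to \boldsymbol\beta^{*}$ as well.

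The final step, which is also the main obstacle, is to pass to the limit in the first-order optimality condition of the subproblem,
$$0 \in \partial \mathcal{L}_{\text{cvx}}(\boldsymbol\beta^{(t_j)}) + \nabla \mathcal{L}_{\text{cave}}(\boldsymbol\beta^{(t_j-1)}).$$
The difficulty is that $\mathcal{L}_{\text{cave}}$ is only piecewise-linear concave, built from $\max$ operators and indicator weights as in (\ref{subgrad}), so $\nabla \mathcal{L}_{\text{cave}}$ is merely a measurable selection from the concave subdifferential rather than a true gradient. The argument rests on the outer semicontinuity of the subdifferential mapping of a finite-valued convex (respectively concave) function: since both $\boldsymbol\beta^{(t_j)}$ and $\boldsymbol\beta^{(t_j-1)}$ converge to $\boldsymbol\beta^{*}$, any accumulation point of the bounded subgradient sequences lies in $\partial \mathcal{L}_{\text{cvx}}(\boldsymbol\beta^{*})$ and $\partial \mathcal{L}_{\text{cave}}(\boldsymbol\beta^{*})$, respectively. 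Combined with the Moreau--Rockafellar sum rule $\partial(\mathcal{L}_{\text{cvx}} + \mathcal{L}_{\text{cave}}) = \partial \mathcal{L}_{\text{cvx}} + \partial \mathcal{L}_{\text{cave}}$, which is available because both summands are finite-valued polyhedral (plus a quadratic regularizer), this delivers $0 \in \partial \mathcal{L}(\boldsymbol\beta^{*})$, so $\boldsymbol\beta^{*}$ is a stationary point and the proposition follows.
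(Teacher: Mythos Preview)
Your proof follows exactly the paper's three-stage DC argument: monotone descent from concavity of $\mathcal{L}_{\text{cave}}$ plus optimality of $\boldsymbol\beta^{(t)}$ in the linearized subproblem, extraction of a limit point via compactness of the level set (Bolzano--Weierstrass), and passage to the limit in the first-order condition $0 \in \partial\mathcal{L}_{\text{cvx}}(\boldsymbol\beta^{(t)}) + \nabla\mathcal{L}_{\text{cave}}(\boldsymbol\beta^{(t-1)})$. Your treatment is in fact more careful than the paper's brief argument---which simply asserts that the (sub)gradients converge---by explicitly invoking strong convexity to force $\lVert\boldsymbol\beta^{(t)}-\boldsymbol\beta^{(t-1)}\rVert\to 0$, outer semicontinuity of the subdifferential, and the Moreau--Rockafellar sum rule to justify the limit passage in the nonsmooth setting.
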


The level set condition for $\boldsymbol\beta^{(0)}$ is a standard assumption in the convergence analysis of non-convex programming \citep{khamaru2018convergence}. Note that Proposition \ref{prop: alg_convergence} does not exclude the possibility of local optima and saddle points, so the global optimum is not guaranteed. In practice, we can try multiple random initializations and select the ones that achieve the best performance on our validation sets.\\

\textit{Proof}: First of all, since $\boldsymbol\beta^{(t)} = \argmin_{\boldsymbol\beta}\mathcal{L}_{\text{cvx}}(\boldsymbol\beta) + <\nabla_{\boldsymbol{\beta}}\mathcal{L}_{\text{cave}}(\boldsymbol\beta^{(t-1)}), \boldsymbol\beta>$, it follows that $$\mathcal{L}_{\text{cvx}}(\boldsymbol\beta^{(t-1)}) + \nabla\mathcal{L}_{\text{cave}}(\boldsymbol\beta^{(t-1)})^T\boldsymbol\beta^{(t-1)} \ge \mathcal{L}_{\text{cvx}}(\boldsymbol\beta^{(t)}) + \nabla_{\boldsymbol{\beta}}\mathcal{L}_{\text{cave}}(\boldsymbol\beta^{(t-1)})^T\boldsymbol\beta^{(t)}.$$ After rearranging this inequality, we have
\begin{align}
    \mathcal{L}_{\text{cvx}}(\boldsymbol\beta^{(t-1)}) - \mathcal{L}_{\text{cvx}}(\boldsymbol\beta^{(t)}) \ge \nabla_{\boldsymbol{\beta}}\mathcal{L}_{\text{cave}}(\boldsymbol\beta^{(t-1)})^T(\boldsymbol\beta^{(t)} - \boldsymbol\beta^{(t-1)}). \notag
\end{align}
By the definition of (sub)gradient $\nabla_{\boldsymbol{\beta}}\mathcal{L}_{\text{cave}}(\boldsymbol\beta^{(t-1)})$, we have 
\begin{align}
    \mathcal{L}_{\text{cave}}(\boldsymbol\beta^{(t)}) \le \mathcal{L}_{\text{cave}}(\boldsymbol\beta^{(t-1)}) + \nabla_{\boldsymbol{\beta}}\mathcal{L}_{\text{cave}}(\boldsymbol\beta^{(t-1)})^T(\boldsymbol\beta^{(t)} - \boldsymbol\beta^{(t-1)}). \notag
\end{align}
Based on the above two inequalities, we can derive
\begin{align}
    \mathcal{L}_{\text{cvx}}(\boldsymbol\beta^{(t-1)}) + \mathcal{L}_{\text{cave}}(\boldsymbol\beta^{(t-1)})  \ge \mathcal{L}_{\text{cvx}}(\boldsymbol\beta^{(t)}) + \mathcal{L}_{\text{cave}}(\boldsymbol\beta^{(t)}), \notag
\end{align}
which indicates that the sequence $\{\mathcal{L}(\boldsymbol\beta^{(t)})\}$ is monotonically decreasing. 

Under the assumption that the initial values $\boldsymbol\beta^{(0)}$ has the following property: the level set $\{\boldsymbol\beta | \mathcal{L}(\boldsymbol\beta) \le \mathcal{L}(\boldsymbol\beta^{(0)})\}$ is compact, then the sequence $\{\boldsymbol\beta^{(t)}\}$ has a limit point $\boldsymbol\beta^{*}$ by the Bolzano-Weierstrass theorem \citep{bartle2000introduction}. 

Next, we prove that $\boldsymbol\beta^{*}$ is a stationary point. Due to the convexity of $\mathcal{L}_{\text{cvx}}(\boldsymbol\beta)$ and $-\mathcal{L}_{\text{cave}}(\boldsymbol\beta)$, the (sub)gradients exist. Furthermore, we have $\nabla\mathcal{L}_{\text{cvx}}(\boldsymbol\beta^{(t)}) + \nabla\mathcal{L}_{\text{cave}}(\boldsymbol\beta^{(t-1)}) = 0$ and $\nabla\mathcal{L}_{\text{cave}}(\boldsymbol\beta^{(t-1)})$ converges to $-\nabla\mathcal{L}_{\text{cvx}}(\boldsymbol\beta^{*})$. Thus, the limit point $\boldsymbol\beta^{*}$ is a stationary point since $\nabla\mathcal{L}(\boldsymbol\beta^{*}) = \nabla\mathcal{L}_{\text{cvx}}(\boldsymbol\beta^{*}) + \nabla\mathcal{L}_{\text{cave}}(\boldsymbol\beta^{*}) = 0$.

\subsection{Proof of Lemma \ref{lemma: fisher_consistency}}
\label{A.4}

First, we show the Fisher consistency of the proposed method under the outcome-weighted framework, i.e., the weight in the risk is $\frac{Y}{\mathbb{P}(\mathbf{A}|\mathbf{X})}$, and the associated risk is as follows
\begin{align}
    \mathcal{R}_{\psi}(f) &= \mathbb{E}[\frac{Y}{\mathbb{P}(\mathbf{A}|\mathbf{X})}\psi(\mathbf{A}, f(\mathbf{X}))] \notag \\
    &= \mathbb{E}[\frac{Y}{\mathbb{P}(\mathbf{A}|\mathbf{X})}(T_1(\mathbf{A}, f(\mathbf{X})) - T_0(\mathbf{A}, f(\mathbf{X})))]. \notag
\end{align}
For any $\mathbf{X}=\mathbf{x}$, the conditional risk is 
\begin{align}
    &\mathbb{E}[\frac{Y}{\mathbb{P}(\mathbf{A}|\mathbf{X})}(T_1(\mathbf{A}, f(\mathbf{X})) - T_0(\mathbf{A}, f(\mathbf{X})))|\mathbf{X} = \mathbf{x}] \notag \\
    =&\sum_{\mathbf{a}\in \mathcal{A}}\mathbb{E}[Y(T_1(\mathbf{A}, f(\mathbf{X})) - T_0(\mathbf{A}, f(\mathbf{X})))|\mathbf{X}=\mathbf{x}, \mathbf{A}=\mathbf{a}]. \notag
\end{align}
Note that for any measurable functions $f(\mathbf{x}) = (f^{(1)}(\mathbf{x}), f^{(2)}(\mathbf{x}), ..., f^{(K)}(\mathbf{x}))$, there exists only one $\mathbf{a}\in\mathcal{A}=\{-1, 1\}^{K}$ (denoted as $\mathbf{a}_*$) such that $a_{*}^{(k)}f^{(k)}(\mathbf{x}) \ge 0$ for all $k\in\{1, 2, ..., K\}$. For any other $\mathbf{a}\neq \mathbf{a}^{*}$, there exists $k_0 \in \{1, 2, ..., K\}$ such that $a^{(k_0)}f^{(k_0)}(\mathbf{x}) \le 0$, then $\min_{k}a^{(k)}f^{(k)}(\mathbf{x}) \le 0$. And if we denote $k_1 = \argmin a^{(k)}f^{(k)}(\mathbf{x})$, we can obtain $T_1(\mathbf{a}, f(\mathbf{x})) = 1 - a^{(k_1)}f^{(k_1)}(\mathbf{x})$ and $T_0(\mathbf{a}, f(\mathbf{x})) = - a^{(k_1)}f^{(k_1)}(\mathbf{x})$, which yields $T_1(a, f(\mathbf{x})) - T_0(a, f(\mathbf{x})) = 1$.  Following the above derivation, we have
\begin{align}
    \scriptsize
    \begin{split}
    &\sum_{\mathbf{a}\in \mathcal{A}}\mathbb{E}[Y|\mathbf{X}=\mathbf{x}, \mathbf{A}=\mathbf{a}](T_1(\mathbf{a}, f(\mathbf{x})) - T_0(\mathbf{a}, f(\mathbf{x}))) \notag \\
    =& \sum_{\mathbf{a}\in\mathcal{A}}\mathbb{E}[Y|\mathbf{X}=\mathbf{x}, \mathbf{A}=\mathbf{a}] + E[Y|\mathbf{X}=\mathbf{x}, \mathbf{A}=\mathbf{a}_*](-1 + T_1(\mathbf{a}_*, f(\mathbf{x})) - T_0(\mathbf{a}_*, f(\mathbf{x}))). \notag
    \end{split}
\end{align}
Note that $-1 + T_1(\mathbf{a}_*, f(\mathbf{x})) - T_0(\mathbf{a}_*, f(\mathbf{x})) \le 0$, then we have $\mathbb{E}[Y|\mathbf{X}=\mathbf{x}, \mathbf{A}=\mathbf{a}_*](-1 + T_1(\mathbf{a}_*, f(\mathbf{x})) - T_0(\mathbf{a}_*, f(\mathbf{x}))) \ge 0$ for any measurable $f(\mathbf{x})$ if $\mathbb{E}[Y|\mathbf{X}=\mathbf{x}, \mathbf{A}=\mathbf{a}_*] < 0$. Meanwhile, $\mathbb{E}[Y|\mathbf{X}=\mathbf{x}, \mathbf{A}=\mathbf{a}_*](-1 + T_1(\mathbf{a}_*, f(\mathbf{x})) - T_0(\mathbf{a}_*, f(\mathbf{x})))$ $\le 0$ for any measurable $f(\mathbf{x})$ if $\mathbb{E}[Y|\mathbf{X}=\mathbf{x}, \mathbf{A}=\mathbf{a}_*] > 0$. Hence, the conditional risk is minimized when $a_* = \argmax_{a\in\mathcal{A}}\mathbb{E}[Y|\mathbf{X}=\mathbf{x}, \mathbf{A}=\mathbf{a}]$ and $a_*^{(k)}f^{(k)}(\mathbf{x})\ge 1$ for any $k\in \{1, 2, ..., K\}$. In other words, the minimizer $\hat{f}(\cdot)$ of $\mathcal{R}_{\psi}(f)$ satisfies $d(\mathbf{x}) = \text{sign}(\hat{f}(\mathbf{x})) = \argmax_{\mathbf{a}\in\mathcal{A}}\mathbb{E}[Y|\mathbf{X} = \mathbf{x}, \mathbf{A} = \mathbf{a}]$. \qed

\subsection{Proof of Theorem \ref{thm: fisher_consistency}}
\label{A.5}

Following the steps in \ref{A.4},  if we adopt the residual $Y - g(\mathbf{X})$ as the weight, then we have similar conclusions as in Section S.4:
\begin{align}
    &\sum_{\mathbf{a}\in \mathcal{A}}\mathbb{E}[Y-g(\mathbf{X})|\mathbf{X}=\mathbf{x}, \mathbf{A}=\mathbf{a}](T_1(\mathbf{a}, f(\mathbf{x})) - T_0(\mathbf{a}, f(\mathbf{x}))) \notag \\
    =& \sum_{\mathbf{a}\in\mathcal{A}}\mathbb{E}[Y-g(\mathbf{X})|\mathbf{X}=\mathbf{x}, \mathbf{A}=\mathbf{a}] \notag \\
    + &\mathbb{E}[Y-g(\mathbf{X})|\mathbf{X}=\mathbf{x}, \mathbf{A}=\mathbf{a}_*](-1 + T_1(\mathbf{a}_*, f(\mathbf{x})) - T_0(\mathbf{a}_*, f(\mathbf{x}))).
\end{align}
Since $\argmax_{a\in\mathcal{A}}\mathbb{E}[Y-g(\mathbf{X})|\mathbf{X}=\mathbf{x}, \mathbf{A}=\mathbf{a}] = \argmax_{\mathbf{a}\in\mathcal{A}}\mathbb{E}[Y|\mathbf{X}=\mathbf{x}, \mathbf{A}=\mathbf{a}]$, the desired results are concluded. \qed

\subsection{Proof of Theorem \ref{thm: excess_risk}}
\label{A.6}

In this proof, we will follow two steps to prove the results. In the first step, we first introduce intermediate risks $\mathcal{R}_g(f)$ and $\mathcal{R}^*_g$ and build connection between $\mathcal{R}_g(f) - \mathcal{R}^*_g$ and $\mathcal{R}_{\psi, g}(f) - \mathcal{R}_{\psi, g}^{*}$. In the second step, we establish the equivalence between $\mathcal{R}_g(f) - \mathcal{R}^*_g$ and $\mathcal{R}(f) - \mathcal{R}^*$ and conclude the results.

Now, we introduce an intermediate risk given $g(\mathbf{X})$ and 0-1 loss: 
\begin{align}
    \mathcal{R}^*_{g} &= \mathbb{E}\bigg[\sum_{a\in\mathcal{A}}\mathbb{E}[Y-g(\mathbf{X})|\mathbf{X}=x, \mathbf{A}=a]\mathbb{I}(a \neq \text{sign}(f_*(\mathbf{x})))\bigg], \notag \\
    \mathcal{R}_{g}(f) &= \mathbb{E}\bigg[\sum_{a\in\mathcal{A}}\mathbb{E}[Y-g(\mathbf{X})|\mathbf{X}=x, \mathbf{A}=a]\mathbb{I}(a \neq \text{sign}(f(\mathbf{x})))\bigg], \notag
\end{align}

From the proof of Lemma \ref{lemma: fisher_consistency} and Theorem \ref{thm: fisher_consistency}, we have 
\begin{align}
    \mathcal{R}_{\psi, g}(f) &= \mathbb{E}\bigg[\sum_{a\in\mathcal{A}}\mathbb{E}[Y-g(\mathbf{X})|\mathbf{X} = x, \mathbf{A} = a]\big\{T_1(a, f(x)) - T_0(a, f(x))\big\}\bigg] \notag \\
    \mathcal{R}_{\psi, g}^{*} &= \mathbb{E}\bigg[\sum_{a\in\mathcal{A}}\mathbb{E}[Y-g(\mathbf{X})|\mathbf{X} = x, \mathbf{A} = a]\big\{T_1(a, f_{*}(x)) - T_0(a, f_{*}(x))\big\}\bigg]. \notag
\end{align}
In addition, $|f_*^{(k)}| \ge 1$, and $a^{(k)}_*f_*^{(k)}(x) \ge 1$ for $a_* =\argmax_{a\in\mathcal{A}}\mathbb{E}[Y-g(\mathbf{X})|\mathbf{X}=X, \mathbf{A}=a]$. Therefore, for any other $a\in\mathcal{A}$, there exists $k_0$ such that $a^{(k_0)}f^{(k_0)}_*(\mathbf{x})$ $\le -1$, which leads the generalized $\psi$-loss $T_1(a, f) - T_0(a, f) = 1$. Then we have 
\begin{align}
    \mathcal{R}_{\psi, g}^* = \mathbb{E}\bigg[\sum_{a\in\mathcal{A}\backslash a_*}\mathbb{E}[Y-g(\mathbf{X})|\mathbf{X}=x, \mathbf{A}=a]\bigg]. \notag
\end{align}

Similarly, we can find 
\begin{align}
    \mathcal{R}^*_{g} &= \mathbb{E}\bigg[\sum_{a\in\mathcal{A}\backslash a_*}\mathbb{E}[Y-g(\mathbf{X})|\mathbf{X}=x, \mathbf{A}=a]\bigg]. \notag
\end{align}

Therefore, it is sufficient to prove that $\mathcal{R}_{\psi, g}(f) \ge \mathcal{R}_{g}(\text{sign}(f))$ to establish the first excess risk bound. Note that for any $f$, there only exists one combination treatment $\mathbf{a}_*$ such that $a^{(k)}_*f^{(k)}_*(\mathbf{x}) > 0$. And for any other $\mathbf{a} \neq \mathbf{a}_*$, there exists $k_0$ such that $a^{(k_0)}f^{(k_0)}(\mathbf{x}) < 0$. Therefore, $\psi(\mathbf{a}, f(\mathbf{x})) = 1$ for any $\mathbf{a}\neq \mathbf{a}_*$, followed by 
\begin{align}
&\mathbb{E}\bigg[\sum_{\mathbf{a}\neq \mathbf{a}_*}\mathbb{E}[Y-g(\mathbf{X})|\mathbf{X} = \mathbf{x}, \mathbf{A} = \mathbf{a}]\psi(\mathbf{a}, f(\mathbf{x}))\bigg] \notag \\
= &\mathbb{E}\bigg[\sum_{\mathbf{a}\neq \mathbf{a}_*}\mathbb{E}[Y-g(\mathbf{X})|\mathbf{X} = \mathbf{x}, \mathbf{A} = \mathbf{a}]\mathbb{I}(\mathbf{a}\neq\text{sign}(f(\mathbf{x})))\bigg].  \notag
\end{align}
Since $\mathbb{E}[Y-g(\mathbf{X})|\mathbf{X}=\mathbf{x}, \mathbf{A}=\mathbf{a}_*] > 0$, and $\psi(\mathbf{a}, f(\mathbf{x})) \ge \mathbb{I}(\mathbf{a}\neq f(\mathbf{x}))$ for any measurable $f$, we conclude that $\mathcal{R}_{\psi, g}(f) \ge \mathcal{R}_{g}(\text{sign}(f))$.

Given that $\mathbb{E}[g(\mathbf{X})]$ is a constant, $\mathcal{R}_g(\text{sign}(f)) = \mathcal{R}(\text{sign}(f)) - \mathbb{E}[g(\mathbf{X})]$ for any measurable $f$, and $\mathcal{R}^*_g = \mathcal{R}^* - \mathbb{E}[g(\mathbf{X})]$, so $\mathcal{R}_g(\text{sign}(f)) - \mathcal{R}_g^* = \mathcal{R}(\text{sign}(f)) - \mathcal{R}^*$, which concludes the results. \qed

\subsection{Proof of Theorem \ref{thm: consistency}}
\label{A.7}

First, let $L(h, b) = \frac{Y - g(\mathbf{X})}{\mathbb{P}(\mathbf{A}|\mathbf{X})}\psi(Z^{(1)}, ..., Z^{(K)})$, where $Z^{(k)} = A^{(k)}(h^{(k)}(\mathbf{X}) + b^{(k)})$, $h^{(k)}(\cdot) \in \mathcal{H}_{\mathcal{K}}$ and $b^{(k)} \in \mathbb{R}$. For the minimizer of the empirical loss (\ref{empirical_loss}), we denote the corresponding estimator as $h_n$ and $b_n$, respectively. By the definition of $h_n$, $b_n$, for any $h^{(k)}\in\mathcal{H}_{\mathcal{K}}$ and $b^{(k)}\in\mathbb{R}$, we have
\begin{align}
    \mathbb{P}_{n}(L(h_n, b_n)) \le \mathbb{P}_{n}(L(h_n, b_n)) + \frac{\lambda}{2}\lVert h_n\rVert_{\mathcal{K}}^2 \le \mathbb{P}_n(L(h, b)) + \frac{\lambda}{2}\lVert h\rVert_{\mathcal{K}}^2, \notag
\end{align}
where $\mathbb{P}_n$ denotes the empirical measure of the observed datasets $(\mathbf{x}_i, \mathbf{a}_i, y_i)_{i=1}^{n}$. Then, $\lim\sup_n\mathbb{P}_n(L(h_n, b_n)) \le \mathbb{P}(L(h, b)) = \mathcal{R}_{\psi}(h + b)$ almost surely. Furthermore, it implies that 
\begin{align}
    \lim\sup_n\mathbb{P}_n(L(h_n, b_n)) \le \inf_{h^{(k)}\in\mathcal{H}_{\mathcal{K}}, b^{(k)}\in\mathbb{R}}\mathcal{R}_{\psi}(h + b) \le \mathbb{P}(L(h_n, b_n)),\quad\text{ w.p. 1.} \notag
\end{align}
Therefore, it is suffice to show that $\mathbb{P}_n(L(h_n, b_n)) - \mathbb{P}(L(h_n, b_n)) \rightarrow 0$ in probability to conclude the results.

In the following, we establish the bound for $\lVert h_n \rVert_{\mathcal{K}}$ and $b_n$ to control the complexity of the space $\mathcal{H}_{\mathcal{K}} + \{1\}$. Since $\mathbb{P}_n(L(h_n, b_n)) + \frac{\lambda}{2}\lVert h_n\rVert^2_{\mathcal{K}} \le \mathbb{P}_n(L(h, b)) + \frac{\lambda}{2}\lVert h\rVert_{\mathcal{K}}^2$ for any $h$ and $b$, we take $h = \mathbf{0}$ and $b = 0$, to obtain that 
\begin{align}
    \mathbb{P}_n(L(h_n, b_n)) + \frac{\lambda}{2}\lVert h_n\rVert^2_{\mathcal{K}} \le \mathbb{P}_n(\frac{Y - g(\mathbf{X})}{\mathbb{P}(\mathbf{A}|\mathbf{X})}). \notag
\end{align}
Note that $0 \le \psi(\cdot) \le 1$, we can derive
\begin{align}
    \lambda\lVert h_n\rVert^2_{\mathcal{K}} \le 4\mathbb{P}_n(\frac{|Y - g(\mathbf{X})|}{\mathbb{P}(\mathbf{A}|\mathbf{X})}) \le 4M. \notag
\end{align}

To obtain the bound for $b_n$, we note that there exists some $\mathbf{x}_i$ such that $|h_n(\mathbf{x}_i) + b_n| < 1$, then we have 
\begin{align}
    |b_n| \le 1 + |h_n(\mathbf{x}_i)| \le  1 + \lVert h_n \rVert_{\infty} \le 1 + C_{\mathcal{K}}\lVert h_n\rVert_{\mathcal{K}}. \notag
\end{align}
Therefore, we can obtain that $|\sqrt{\lambda}b_n| \le \sqrt{\lambda} + C_{\mathcal{K}}\sqrt{\lambda}\lVert h_n\rVert_{\mathcal{K}}$. Since $\lambda \rightarrow 0$, $C_{\mathcal{K}}$ and $\sqrt{\lambda}\lVert h_n\rVert_{\mathcal{K}}$ are bounded, $|\sqrt{\lambda}b_n|$ is bounded too. Furthermore, since $\psi(\cdot)$ is a Lipschitz continuous function with Lipschitz constant 1, the class $\{\sqrt{\lambda}L(h, b): \lVert\sqrt{\lambda}h\rVert_{\mathcal{K}}, |\sqrt{\lambda}b| \text{ are bounded}\}$ is a P-Donsker class, which induces
\begin{align}
    \sqrt{n\lambda}(\mathbb{P}_n(L(h_n, b_n)) - \mathbb{P}(L(h_n, b_n)))  = O_p(1). \notag
\end{align}
Consequently, as $n\lambda \rightarrow \infty$, we have $\mathbb{P}_n(L(h_n, b_n)) \rightarrow \mathbb{P}(L(h_n, b_n))$ in probability. \qed

\subsection{Estimation of Working Models of Treatment-free Effects and Propensity Score}
\label{A.8}

In observational studies, the treatment assignment is usually unknown to practioners. Therefore, it is essential to estimate the propensity score before estimating the ITR via (\ref{empirical_loss}). In this work, we utilize the multinomial logistic regression to estimate the propensity score. Specifically, we first encode the combination treatment with categorical codings $\tilde{A}_i$: $\{1, ..., 2^{K}\}$, and then maximize the likelihood:
\begin{align}
    \max_{\tau_1, ..., \tau_{2^{K}}}\sum_{i=1}^{n}\sum_{j=1}^{2^{K}}\mathbb{I}(\tilde{A}_i = j)\log\frac{\exp(\mathbf{X}_i^T\tau_j)}{\sum_{j}\exp(\mathbf{X}_i^T\tau_j)} - \lambda\sum_{j}\lVert\tau_j\rVert_2^2, \notag
\end{align}
and the estimated propensity score is $\mathbb{P}(\tilde{A}_i|\mathbf{X}_i) = \frac{\exp(\mathbf{X}_i^T\tau_{\tilde{A}_i})}{\sum_{j}\exp(\mathbf{X}_i^T\tau_j)}$.

As for the treatment-free effects, $g(\mathbf{X}) = \frac{1}{|\mathcal{A}|}\sum\mathbb{E}[Y|\mathbf{X}, \mathbf{A}] = \mathbb{E}[\frac{Y}{|\mathcal{A}|\mathbb{P}(\mathbf{A}|\mathbf{X})}|\mathbf{X}]$, so we assume as linear model to fit the treatment-free effects and obtain an estimation by minimizing the following loss:
\begin{align}
    \min_{\eta}\sum_{i=1}^{n}\frac{1}{\hat{\mathbb{P}}(\mathbf{A}|\mathbf{X})}(Y_i - \mathbf{X}_i^T\eta)^2. \notag
\end{align}
For clinical trials with uniform random assignment, the above loss reduces to 
\begin{align}
    \min_{\eta}\sum_{i=1}^{n}(Y_i - \mathbf{X}_i^T\eta)^2. \notag
\end{align}

\subsection{Consistency of $\hat{f}_n$ in Observational Study}
\label{A.9}

In this section, we show the consistency of the proposed method in observational study, where the propensity score model is also estimated from finite sample data. The following Theorem states the necessary assumptions and the consistency of the proposed estimator.

\begin{theorem}
\label{thm: consistency_os}
Suppose the penalty coefficient $\lambda$ in the primal form (\ref{loss_decomp}) satisfies $\lambda \rightarrow 0$ and $n\lambda \rightarrow \infty$. The weights $|Y - g(\mathbf{X})|/\mathbb{P}(\mathbf{A}|\mathbf{X})$'s are upper bounded by some positive constant $M$ almost surely. Suppose the working model of propensity score $\mathbb{P}(\mathbf{A}|\mathbf{X}; \tau_n)$ is a uniform consistent estimator of the true propensity score model, say, $\lVert \tau_n - \tau \rVert \rightarrow 0$ in probability and it is bounded below by some constant $\xi > 0$ for any $\mathbf{X}\in\mathcal{X}$ and $\mathbf{A}\in\mathcal{X}$. Then for any distribution $P$ for $(\mathbf{X}, \mathbf{A}, Y)$, we have
\begin{align}
    \mathbb{P}\bigg\{\lim_{n\rightarrow\infty}\mathcal{R}_{\psi, g}(\hat{f}_n) = \inf_{f\in\mathcal{H}_{\mathcal{K}} + \{1\}}\mathcal{R}_{\psi, g}(f)\bigg\} = 1, \notag
\end{align}
where $\hat{f}_n$ is the minimizer of the empirical loss (\ref{empirical_loss}) with plug-in estimator of propensity score $\hat{\mathbb{P}}(\mathbf{A}|\mathbf{X})$. $\mathcal{H}_{\mathcal{K}} + \{1\}$ denotes the shifted reproducing kernel Hilbert space we considered in Section \ref{sec: nldr}.
\end{theorem}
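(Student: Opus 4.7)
The plan is to extend the proof of Theorem \ref{thm: consistency} by decomposing the total error into a piece attributable to the plug-in propensity score and a piece attributable to the usual empirical process fluctuations. Write $L(h, b) = \frac{Y - g(\mathbf{X})}{\mathbb{P}(\mathbf{A}|\mathbf{X})}\psi(Z^{(1)}, \ldots, Z^{(K)})$ for the loss under the true propensity and $\hat{L}(h, b) = \frac{Y - g(\mathbf{X})}{\mathbb{P}(\mathbf{A}|\mathbf{X}; \tau_n)}\psi(Z^{(1)}, \ldots, Z^{(K)})$ for the loss actually minimized when defining $\hat{f}_n = h_n + b_n$. Because $|Y - g(\mathbf{X})|/\mathbb{P}(\mathbf{A}|\mathbf{X}) \le M$ almost surely and $\mathbb{P}(\mathbf{A}|\mathbf{X}) \le 1$, we get $|Y - g(\mathbf{X})| \le M$; together with $\mathbb{P}(\mathbf{A}|\mathbf{X};\tau_n) \ge \xi$ this yields $|\hat{L}| \le M/\xi$ almost surely. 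Substituting $(h, b) = (\mathbf{0}, \mathbf{0})$ into the penalized optimality inequality for $\hat{f}_n$ then gives $\lambda \lVert h_n\rVert_{\mathcal{K}}^2 \le 4M/\xi$, and the argument of Theorem \ref{thm: consistency} for the intercept carries over componentwise to give $|\sqrt{\lambda}\, b_n^{(k)}| = O_p(1)$ for each $k$.

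The key new step is to show $\mathbb{P}_n(\hat{L}(h_n, b_n)) - \mathbb{P}(L(h_n, b_n)) \to 0$ in probability, and I would split this as
\begin{equation*}
\mathbb{P}_n\bigl[\hat{L}(h_n, b_n) - L(h_n, b_n)\bigr] \;+\; (\mathbb{P}_n - \mathbb{P}) L(h_n, b_n).
\end{equation*}
For the first term,
\begin{equation*}
|\hat{L} - L| \;\le\; \frac{|Y - g(\mathbf{X})|\cdot|\mathbb{P}(\mathbf{A}|\mathbf{X}) - \mathbb{P}(\mathbf{A}|\mathbf{X};\tau_n)|}{\mathbb{P}(\mathbf{A}|\mathbf{X})\,\mathbb{P}(\mathbf{A}|\mathbf{X};\tau_n)} \;\le\; \frac{M}{\xi}\sup_{\mathbf{a},\mathbf{x}}|\mathbb{P}(\mathbf{a}|\mathbf{x}) - \mathbb{P}(\mathbf{a}|\mathbf{x};\tau_n)|,
\end{equation*}
uniformly in $(h, b)$ since $0 \le \psi \le 1$; continuity of the multinomial-logistic score in $\tau$ combined with $\lVert \tau_n - \tau\rVert \to 0$ in probability makes this supremum $o_p(1)$. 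For the second term, because $\psi$ is $1$-Lipschitz and $|L|$ is bounded by $M$, the scaled class $\{\sqrt{\lambda}\, L(h, b) : \sqrt{\lambda}\,\lVert h\rVert_{\mathcal{K}} \le C_1,\ |\sqrt{\lambda}\, b^{(k)}| \le C_2\}$ is $P$-Donsker, so $(\mathbb{P}_n - \mathbb{P})L(h_n, b_n) = O_p((n\lambda)^{-1/2}) = o_p(1)$ under $n\lambda \to \infty$, exactly as in the proof of Theorem \ref{thm: consistency}.

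To conclude, applying the same two-term decomposition at any deterministic point $(h, b)$ (together with the law of large numbers) yields $\mathbb{P}_n(\hat{L}(h, b)) \to \mathcal{R}_{\psi, g}(h + b)$ almost surely, so the optimality of $(h_n, b_n)$ combined with the step above gives
\begin{equation*}
\limsup_n \mathbb{P}(L(h_n, b_n)) \;\le\; \limsup_n \mathbb{P}_n(\hat{L}(h_n, b_n)) \;\le\; \mathcal{R}_{\psi, g}(h + b) + \tfrac{\lambda}{2}\lVert h\rVert_{\mathcal{K}}^2;
\end{equation*}
sending $\lambda \to 0$ and infimizing over $(h, b) \in \mathcal{H}_{\mathcal{K}} \times \mathbb{R}^{K}$ matches $\inf_{f \in \mathcal{H}_{\mathcal{K}} + \{1\}} \mathcal{R}_{\psi, g}(f)$, as desired. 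The main technical obstacle is controlling $\hat{L} - L$ uniformly along the data-dependent sequence $(h_n, b_n)$, which would otherwise require a joint empirical-process analysis over both the nuisance parameter $\tau$ and the decision rule. What makes the argument clean is that $\hat{L} - L$ depends on $(h, b)$ only through the bounded factor $\psi \in [0, 1]$, so a single sup-norm bound on the propensity-score error handles every $(h_n, b_n)$ simultaneously and decouples the nuisance-model error from the Donsker argument for $L$.
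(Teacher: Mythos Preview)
Your proof is correct and follows the same overall architecture as the paper (bound $\lVert h_n\rVert_{\mathcal{K}}$ and $b_n$, separate a propensity-error term from an empirical-process term, then invoke a Donsker argument and optimality). The concrete decomposition, however, is different. The paper writes $\mathbb{P}(L(h_n,b_n,\tau)) - \mathbb{P}(L(h,b,\tau))$ as a five-term telescoping sum $(I)$--$(V)$ and, crucially, runs its Donsker step on the class indexed by $(h,b,\tau)$ jointly, i.e.\ with the \emph{estimated} propensity appearing inside the empirical-process term $(II)$. Your two-term split instead peels off $\mathbb{P}_n[\hat L - L]$ first via the uniform bound $\lvert\hat L - L\rvert \le (M/\xi)\sup_{\mathbf{a},\mathbf{x}}\lvert \mathbb{P}(\mathbf{a}\mid\mathbf{x}) - \mathbb{P}(\mathbf{a}\mid\mathbf{x};\tau_n)\rvert$, so the remaining Donsker step is on $\{\sqrt{\lambda}\,L(h,b)\}$ with the \emph{true} propensity only, indexed by $(h,b)$. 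This decoupling is a bit more elementary---you avoid having to verify Donsker over the extra nuisance index $\tau$---and your closing remark identifies exactly why it works: the dependence of $\hat L - L$ on $(h,b)$ sits entirely inside the bounded factor $\psi\in[0,1]$. The paper's route would in principle accommodate nuisance estimators that are not uniformly close in sup-norm but still yield a Donsker joint class, though that extra generality is not used here.
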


\begin{proof}
We first introduce some notations for the ease of derivation. First, we denote the propensity score model as $\mathbb{P}(\mathbf{A}|\mathbf{X}; \tau)$ where $\mathbb{P}(\cdot|\cdot)$ specifies the function form, and $\tau$ is the associated parameter. The estimated propensity score is denoted as $\mathbb{P}(\mathbf{A}|\mathbf{X}; \tau_n)$, where $\tau_n$ is the finite sample estimator of $\tau$. In addition, we define $L(h, b, \tau) = \frac{Y-g(\mathbf{X})}{\mathbb{P}(\mathbf{A}|\mathbf{X}; \tau)}\psi(Z^{(1)}, ..., Z^{(K)})$, and $L(h, b, \tau_n) = \frac{Y-g(\mathbf{X})}{\mathbb{P}(\mathbf{A}|\mathbf{X}; \tau_n)}\psi(Z^{(1)}, ..., Z^{(K)})$. Therefore, we have
\begin{align}
    h_n, b_n = \argmin_{h^{(k)}\in\mathcal{H}_{\mathcal{K}}, b^{(k)}\in\mathbb{R}}\mathbb{P}_n(L(h, b, \tau_n)). \notag
\end{align}

Our expected result can be expressed as 
\begin{align}
\lim_{n\rightarrow\infty}\mathbb{P}(L(h_n, b_n, \tau)) = \inf_{h^{(k)}\in \mathcal{H}_{\mathcal{K}}, b^{(k)}\in\mathbb{R}}\mathbb{P}(L(h, b, \tau)). \notag 
\end{align}

The ($\ge$) part is straightforward in that
\begin{align}
    \inf_{h^{(k)}\in \mathcal{H}_{\mathcal{K}}, b^{(k)}\in\mathbb{R}}&\mathbb{P}(L(h, b, \tau)) \le \mathbb{P}(L(h_n, b_n, \tau)), \notag
\end{align}
and it is followed by 
\begin{align}
\inf_{h^{(k)}\in \mathcal{H}_{\mathcal{K}}, b^{(k)}\in\mathbb{R}}\mathbb{P}(L(h, b, \tau)) \le \lim_{n\rightarrow \infty}\mathbb{P}(L(h_n, b_n, \tau)). \notag
\end{align}

For the ($\le$) part, we can decompose the difference as follows:
\begin{align}
    \mathbb{P}(L(h_n, b_n, \tau)) - \mathbb{P}(L(h, b, \tau)) &= \mathbb{P}(L(h_n, b_n, \tau)) - \mathbb{P}(L(h_n, b_n, \tau_n)) \notag \\
    &+ \mathbb{P}(L(h_n, b_n, \tau_n)) - \mathbb{P}_n(L(h_n, b_n, \tau_n)) \notag \\
    &+ \mathbb{P}_n(L(h_n, b_n, \tau_n)) - \mathbb{P}_n(L(h, b, \tau_n)) \notag \\
    &+ \mathbb{P}_n(L(h, b, \tau_n)) - \mathbb{P}_n(L(h, b, \tau)) \notag \\
    &+ \mathbb{P}_n(L(h, b, \tau)) - \mathbb{P}(L(h, b, \tau)) \notag \\
    &= (I) + (II) + (III) + (IV) + (V), \notag
\end{align}
where the term $(III)$ is negative by the definition of $h_n$ and $b_n$, and the term $(V)$ is easily goes to zero in probability based on weak law of large number. Therefore, we only need to consider the asymptotic properties of the terms $(I), (II), (IV)$.

For the term $(I)$, it is easy to see 
\begin{align}
    \mathbb{P}\bigg\{\frac{Y-g(\mathbf{X})}{\mathbb{P}(\mathbf{A}|\mathbf{X}; \tau)}[1 - \frac{\mathbb{P}(\mathbf{A}|\mathbf{X}; \tau)}{\mathbb{P}(\mathbf{A}|\mathbf{X}; \tau_n)}]\psi(h_n, b_n)\bigg\} \rightarrow 0, \notag
\end{align}
due to the boundedness of $\frac{Y-g(\mathbf{X})}{\mathbb{P}(\mathbf{A}|\mathbf{X}; \tau)}$ and $0 \le \psi(h_n, b_n) \le 1$.

For the term (II), we will use empirical process theory to prove this convergence. Before that, we establish the bound for $h_n$, $b_n$ and $\tau_n$ to control the complexity. By the same means, we have 
\begin{align}
    \mathbb{P}_n(L(h_n, b_n, \tau_n)) + \frac{\lambda}{2}\lVert h_n\rVert_{\mathcal{K}}^2 \le \mathbb{P}_n(L(h, b, \tau_n)) + \frac{\lambda}{2}\lVert h\rVert_{\mathcal{K}}^2, \notag
\end{align}
and we can take $h = \mathbf{0}$ and $b = 0$, so we have 
\begin{align}
    \lambda\lVert h_n\rVert_{\mathcal{K}}^2 &\le 4\mathbb{P}_n(\frac{|Y-g(\mathbf{X})|}{\mathbb{P}(\mathbf{A}|\mathbf{X}; \tau_n)}) \notag \\
    &\le 4\mathbb{P}_n(\frac{|Y-g(\mathbf{X})|}{\mathbb{P}(\mathbf{A}|\mathbf{X}; \tau)}\frac{\mathbb{P}(\mathbf{A}|\mathbf{X}; \tau_n)}{\mathbb{P}(\mathbf{A}|\mathbf{X}; \tau)}) \notag \\
    &\le 4M/p_{\mathcal{A}}. \notag
\end{align}
The bound for $b_n$ can be derived as the same approach as in Appendix \ref{A.7}, in that 
\begin{align}
    |\sqrt{\lambda}b_n| \le \sqrt{\lambda} + \mathcal{C}_{\mathcal{K}}\lVert h_n\rVert_{\mathcal{K}}. \notag
\end{align}
In summary, the class $\{\sqrt{\lambda}L(h, b, \tau): \lVert\sqrt{\lambda}h\rVert_{\mathcal{K}}, |\sqrt{\lambda}b|, \lVert \sqrt{\lambda}\tau\rVert_{2} \text{ are bounded}\}$ is a P-Donsker class, which induces 
\begin{align}
    \sqrt{n\lambda}(\mathbb{P}_n(L(h_n, b_n, \tau_n)) - \mathbb{P}(L(h_n, b_n, \tau_n))) = O_p(1). \notag
\end{align}
Consequently, as $n\lambda \rightarrow \infty$, we have $\mathbb{P}_n(L(h_n, b_n, \tau_n)) \rightarrow \mathbb{P}(L(h_n, b_n, \tau_n))$ in probability.

For term (IV), we first consider the upper bound of the difference
\begin{align}
    \left|L(h, b, \tau_n) - L(h, b, \tau)\right| &= \left|\frac{Y_i - g(\mathbf{X}_i)}{\mathbb{P}(\mathbf{A}_i| \mathbf{X}_i; \tau_n)} - \frac{Y_i - g(\mathbf{X}_i)}{\mathbb{P}(\mathbf{A}_i| \mathbf{X}_i; \tau)}\right|\psi(h, b) \notag \\
    &\le \left|\frac{Y_i-g(\mathbf{X}_i)}{\mathbb{P}(\mathbf{A}_i|\mathbf{X}_i; \tau)}(\frac{\mathbb{P}(\mathbf{A}_i|\mathbf{X}_i; \tau)}{\mathbb{P}(\mathbf{A}_i|\mathbf{X}_i; \tau_n)} - 1)\right| \notag \\
    &\le M\left|\frac{\mathbb{P}(\mathbf{A}_i|\mathbf{X}_i; \tau)}{\mathbb{P}(\mathbf{A}_i|\mathbf{X}_i; \tau_n)}-1\right| \notag
\end{align}
Since $\tau_n \rightarrow \tau$ uniformly, for any $\epsilon > 0$, there exists $N_{\epsilon}$ such that if $n > N_{\epsilon}$, $\left|\frac{\mathbb{P}(\mathbf{A}_i|\mathbf{X}_i; \tau)}{\mathbb{P}(\mathbf{A}_i|\mathbf{X}_i; \tau_n)}-1\right| < \epsilon$. Therefor, for $n > N_{\epsilon}$, $\mathbb{P}_n(L(h, b, \tau_n) - L(h, b, \tau)) < M\epsilon$, which shows that (IV) converges to zero as $n$ goes to infinity. The desired results are concluded.
\end{proof}

\subsection{Numerical Experiment in Observational Study}
\label{A.10}

In this section, we extend the simulation studies in Section \ref{sec: sim} to observational studies, where a propensity score model $\mathbb{P}(\mathbf{A}|\mathbf{X})$ is controlling the treatment assignment. Specifically, the propensity score model we adopt is defined as
\begin{align}
    \mathbb{P}(\tilde{A}_j|\mathbf{X}) = \frac{\exp(j\cdot\mathbf{X}^T\tau)}{\sum_{j}\exp(j\cdot\mathbf{X}^T\tau)},
\end{align}
where $\tau = (-0.5, -0.4, ..., -0.1, 0.1, ..., 0.4, 0.5) \in \mathbb{R}^{10}$, and $\tilde{A}_j \in \{1, 2, ..., 2^{K}\}$ is the categorical coding of combination treatment $\mathbf{A}\in\{-1, 1\}^{K}$. In our algorithm, we first estimate the propensity score using the multinomial logistic regression \citep{hastie2009elements} and then plug it into (\ref{empirical_loss}) to estimate the ITR. 

In this simulation, all other data generating processes including covariates distribution, treatment effects, and sample sizes are identical to the settings in (\ref{sec: sim}). Table \ref{tab: sim_os_results_value} and \ref{tab: sim_os_results_accuracy} present the evaluation and comparison of our methods with competing methods, which demonstrate our method can still outperform competing methods in the observational study settings.

\begin{table}[!ht]
    \centering
    \scriptsize
    \begin{tabular}{c|c|ccc}
    \hline
    Setting & Method & 400 & 800 & 2000\\
    \hline
    \multirow{7}{*}{1}&\textbf{MLRWL-Linear} & \textbf{4.112(0.144)} & \textbf{4.398(0.082)} & \textbf{4.437(0.071)} \\
    &\textbf{MLRWL-Kernel} & 3.934(0.137) & 4.218(0.073) & 4.357(0.059)\\
    &OWL-DL & 4.010(0.118) & 4.100(0.109) & 4.201(0.089) \\
    &$L_1$-PLS & 4.057(0.109) & 4.148(0.085) & 4.265(0.094) \\
    &OWL-MD & 3.660(0.176) & 3.806(0.125) & 3.927(0.118) \\
    &MOWL-Linear & 3.132(0.184) & 3.180(0.104) & 3.280(0.097) \\
    &MOWL-Kernel & 2.892(0.184) & 3.002(0.224) & 3.273(0.110) \\
    \hline
    \multirow{7}{*}{2}&\textbf{MLRWL-Linear} & 1.382(0.055) & 1.420(0.049) & 1.427(0.047) \\
    &\textbf{MLRWL-Kernel} & \textbf{1.836(0.079)} & \textbf{1.948(0.062)} & \textbf{2.080(0.056)} \\
    &OWL-DL & 1.678(0.098) & 1.701(0.093) & 1.702(0.091) \\
    &$L_1$-PLS & 1.689(0.089) & 1.707(0.099) & 1.724(0.083) \\
    &OWL-MD & 1.657(0.130) & 1.684(0.120) & 1.699(0.079) \\
    &MOWL-Linear & 1.771(0.159) & 1.893(0.118) &  1.938(0.108)\\
    &MOWL-Kernel & 1.798(0.039) & 1.904(0.039) & 1.969(0.032) \\
    \hline
    \multirow{7}{*}{3}&\textbf{MLRWL-Linear} & 4.413(0.253) & 4.618(0.192) & 4.660(0.113) \\
    &\textbf{MLRWL-Kernel} & \textbf{4.730(0.086)} & \textbf{4.734(0.084)} & \textbf{4.736(0.075)} \\
    &OWL-DL & 4.302(0.200) & 4.602(0.198) & 4.639(0.187) \\
    &$L_1$-PLS & 4.205(0.216) & 4.188(0.186) & 4.219(0.143) \\
    &OWL-MD & 4.421(0.183) & 4.499(0.290) & 4.501(0.214) \\
    &MOWL-Linear & 4.609(0.169) & 4.600(0.101) & 4.602(0.086) \\
    &MOWL-Kernel & 4.712(0.090) & 4.703(0.077) & 4.721(0.080) \\
    \hline
    \end{tabular}
    \caption{Simulation studies: mean and standard error of the value function under the proposed method with linear and nonlinear decision rules, and five competing methods: the outcome weighted learning with deep learning (OWL-DL, \citep{liang2018estimating}), the $L_1$ penalized least square ($L_1$-PLS, \citep{qian2011performance}), the outcome weighted learning with multinomial deviance (OWL-MD, \citep{huang2019multicategory}), and the multicategory outcome weighted learning with linear and kernel functions (MOWL-Linear and MOWL-Kernel, \citep{zhang2020multicategory}). Higher value is better.}
    \label{tab: sim_os_results_value}
\end{table}

\begin{table}[!ht]
    \centering
    \scriptsize
    \begin{tabular}{c|c|ccc}
    \hline
    Setting & Method & 400 & 800 & 2000 \\
    \hline
    \multirow{7}{*}{1}&\textbf{MLRWL-Linear} & \textbf{0.773(0.057)} & \textbf{0.861(0.028)} & \textbf{0.893(0.019)} \\
    &\textbf{MLRWL-Kernel} & 0.691(0.047) & 0.764(0.025) & 0.803(0.018)  \\
    &OWL-DL & 0.633(0.047) & 0.649(0.039) & 0.672(0.030) \\
    &$L_1$-PLS & 0.653(0.025) & 0.674(0.014) & 0.694(0.014) \\
    &OWL-MD & 0.615(0.055) & 0.645(0.037) & 0.662(0.028) \\
    &MOWL-Linear & 0.466(0.032) & 0.477(0.023) & 0.497(0.018) \\
    &MOWL-Kernel & 0.364(0.056) & 0.382(0.050) & 0.464(0.028) \\
    \hline
    \multirow{7}{*}{2}&\textbf{MLRWL-Linear} & 255(0.021) & 0.266(0.013) & 0.267(0.011) \\
    &\textbf{MLRWL-Kernel} & \textbf{0.473(0.017)} & \textbf{0.522(0.019)} & \textbf{0.596(0.013)} \\
    &OWL-DL & 0.334(0.038) & 0.342(0.030) & 0.350(0.032) \\
    &$L_1$-PLS & 0.326(0.027) & 0.328(0.019) & 0.334(0.012) \\
    &OWL-MD & 0.326(0.022) & 0.326(0.016) & 0.313(0.022) \\
    &MOWL-Linear & 0.354(0.026) & 0.367(0.018) & 0.368(0.011) \\
    &MOWL-Kernel & 0.386(0.006) & 0.387(0.007) & 0.400(0.024) \\
    \hline
    \multirow{7}{*}{3}&\textbf{MLRWL-Linear} & 0.379(0.180) & 0.565(0.167) & 0.600(0.153) \\
    &\textbf{MLRWL-Kernel} & \textbf{0.721(0.072)} & \textbf{0.723(0.052)} & \textbf{0.742(0.013)} \\
    &OWL-DL & 0.493(0.054) & 0.542(0.049) & 0.608(0.049) \\
    &$L_1$-PLS & 0.180(0.028) & 0.170(0.023) & 0.172(0.014) \\
    &OWL-MD & 0.386(0.058) & 0.422(0.062) & 0.467(0.067) \\
    &MOWL-Linear & 0.518(0.092) & 0.510(0.049) & 0.504(0.043) \\
    &MOWL-Kernel & 0.718(0.020) & 0.726(0.036) & 0.733(0.048) \\
    \hline
    \end{tabular}
    \caption{Simulation studies: mean and standard error of the accuracy under the proposed method with linear and nonlinear decision rules, and five competing methods: the outcome weighted learning with deep learning (OWL-DL, \citep{liang2018estimating}), the $L_1$ penalized least square ($L_1$-PLS, \citep{qian2011performance}), the outcome weighted learning with multinomial deviance (OWL-MD, \citep{huang2019multicategory}), and the multicategory outcome weighted learning with linear and kernel functions (MOWL-Linear and MOWL-Kernel, \citep{zhang2020multicategory}).}
    \label{tab: sim_os_results_accuracy}
\end{table}

\end{appendix}

\newpage
\begin{acks}[Acknowledgments]
The authors would like to thank the anonymous referees, the Associate Editor and the Editor for their constructive comments that improved the quality of this paper.
\end{acks}

\begin{funding}
This work is supported by National Science Foundation Grants DMS 2210640 and DMS 1952406.
\end{funding}

\bibliographystyle{imsart-number} 
\bibliography{reference.bib}       

\end{document}